\newtheorem{theorem}{Theorem}
\newtheorem{corollary}{Corollary}
\newcounter{remarkcount}
\title{Colored Markov Modulated Fluid Queues}
\author{Benny Van Houdt}
\date{}
\begin{document}

\begin{abstract}
   Markov-modulated fluid queues (MMFQs) are a powerful modeling framework for analyzing the performance of computer and communication systems. Their distinguishing feature is that the underlying Markov process evolves on a continuous state space, making them well suited to capture the dynamics of workloads, energy levels, and other performance-related quantities. Although classical MMFQs do not permit jumps in the fluid level, they can still be applied to analyze a wide range of jump processes.
   
In this paper, we generalize the MMFQ framework in a new direction by introducing {\bf colored MMFQs} and 
{\bf colored MMFQs with fluid jumps}. This enriched framework provides an additional form of memory: the color of incoming fluid can be used to keep track of the fluid level when certain events took place. This capability greatly enhances modeling flexibility and enables the analysis of queueing systems that would otherwise be intractable due to the curse of dimensionality or state-space explosion.
\end{abstract}

\maketitle

\section{Introduction}

Markov chain modeling is a powerful technique for evaluating the performance of computer systems and networks \cite{Asmussen2021AppliedProbability,Bolch2006Queueing,HarcholBalter2013Performance,leboudec2010performance,Stewart2009Probability,BruneelKim1993DiscreteTimeATM}, and has also been applied in areas such as risk management and insurance, manufacturing, supply chains, and inventory management.
A Markovian model requires a state descriptor $\Omega$, where the system’s state $s(t)\in\Omega$ at time $t$ determines its future evolution independently of the past. For stable systems, the underlying Markov chain admits a unique stationary distribution, which (under mild conditions) yields performance measures.

When $\Omega$ is finite and of moderate size, the stationary distribution can be obtained by solving a system of linear equations. However, many practical models have very large or infinite state spaces, rendering direct methods such as Gaussian elimination infeasible. This motivated the development of matrix analytic methods, which exploit structural properties of the equations. Since the pioneering work of Neuts in the 1980s \cite{neuts2,neuts1}, these methods have been extensively studied and are covered in several textbooks \cite{latouche1,bini2,Alfa2010Queueing,He2013Fundamentals}. 

Most of the work has focused on Markov chains with denumerable state spaces, such as Quasi-Birth-Death, M/G/1-type Markov chains,
and their tree-structured extensions \cite{takine1,yeung1}. 
A notable exception is the work on Markov-modulated fluid queues \cite{soares3,AhnRamaswami2003,AkarSohraby2004} and their
multi-layered extensions \cite{KankayaAkar2008,BeanOReilly2008}, which have found applications in various areas \cite{GatzianasGeorgiadisTassiulas2010,Bierbooms2012,YaziciAkar2013,Samuelson2017,YaziciAkar2017,TangTan2017,HeZhangYe2018,Simon2020,NoblesseSonenbergBouteLambrechtVanHoudt2022,vanhoudt_nudge,BreuerBadescuSoaresLatoucheRemicheStanford2005,HorvathVanHoudt2012}.
MMFQs can also be leveraged to study certain fluid queue models with jumps \cite{dzial1,vanHoudt29}.

In this paper, we extend the framework of Markov-modulated fluid queues (MMFQs) in a new direction by introducing {\bf colored MMFQs and colored MMFQs with fluid jumps}. In a classical MMFQ, the fluid in the queue can be regarded as having a single color. In contrast, a colored MMFQ allows different colors of fluid to be added to the queue.
The key advantage of introducing multiple colors is that they provide an additional form of memory: the color of incoming fluid can be used to keep track of the fluid level when certain events took place. This enriched modeling capability enables the analysis of queueing systems that would otherwise be intractable using existing approaches, due to the curse of dimensionality or state-space explosion.

For example, when the fluid represents the workload of jobs in a queue, one could assign a different color to the work contributed by each job. This makes it possible to employ a workload-process approach even when the system accommodates only a finite number of jobs. Such a refinement is not achievable with a classical MMFQ, where only the total workload is tracked, without distinguishing contributions from individual jobs. While finite-capacity MMFQs have been studied before, in those models the capacity constraint applies to the total workload in the queue, rather than the number of jobs.

The remainder of the paper is organized as follows. Section \ref{sec:classic} reviews the classic Markov-modulated fluid queue (MMFQ). Sections \ref{sec:coloredMMFQ} and \ref{sec:stationary} introduce the framework of colored MMFQs, with special cases presented in Section \ref{sec:special}. Extensions to colored MMFQs with fluid jumps are developed in Section \ref{sec:jumps}, and applications are discussed in Section \ref{sec:apps}. 
Numerical and runtime results are provided in Section \ref{sec:num}. The proof of Theorem \ref{th:main} is outlined in Section \ref{sec:proof}, and conclusions along with directions for future work are given in Section \ref{sec:future}.

\section{The classic Markov-modulated fluid queue:}\label{sec:classic}

The classic Markov-modulated fluid queue is a Markov process on the state space $\Omega$ of the form 
\[ \Omega = \{(0,i)| i \in S_-) \cup \{(x,i) | x \in  \mathbb{R}^+,i \in S \},\]
where $S = S_- \cup S_+$ is a finite set of states of the so-called background process.
When the state equals $(x,i)$ at time $t$, we refer to $x$ as the fluid level and to
$i$ as the background state. Unless a background transition occurs, an MMFQ evolves as follows\footnote{The generalization to MMFQs with rates that depend on the state $i \in S$ is not hard, see \cite[Section 1.7]{soares3}.}:
\begin{itemize}
    \item when $x > 0$ and $i \in S_-$ the fluid level decreases at
rate $1$,
    \item when $x > 0$ and $i \in S_+$ the fluid level increases at
rate $1$,
\item when $x=0$, meaning $i \in S_-$, the state remains the same.
\end{itemize}
Background transitions can occur from state $(x,i)$ to $(x,j)$ with $x>0$ and are characterized by the matrices:
\begin{itemize}
    \item $(T_{--})_{ij}$: rate of moving from state $(x,i)$ to $(x,j)$  with $i,j \in S_-$,
\item $(T_{-+})_{ij}$: rate of moving from state $(x,i)$ to $(x,j)$  with $i \in S_-, j \in S_+$,
\item $(T_{++})_{ij}$ : rate of moving from state $(x,i)$ to $(x,j)$ with $i,j \in S_+$,
\item $(T_{+-})_{ij}$: rate of moving from state $(x,i)$ to $(x,j)$ with $i \in S_+, j \in S_-$.
\end{itemize}
Background transitions  can also occur in the boundary states $(0,i)$ and are characterized by
\begin{itemize}
    \item $(T_{--}^{(0)})_{ij}$: rate of moving from state $(0,i)$ to $(0,j)$ with $i,j \in S_-$,
    \item $(T_{-+}^{(0)})_{ij}$: rate of moving from state $(0,i)$ to $(0^+,j)$ with $i \in S_-, j\in S_+$,
    which corresponds to a background transition that initiates an increase in the fluid level away from zero.
\end{itemize}
Let 
\[ T=\begin{bmatrix}
    T_{++} & T_{+-} \\ T_{-+} & T_{--}
\end{bmatrix},\]
and define $\xi$ as the invariant vector of $T$ partitioned as $\xi = (\xi_+, \xi_-)$ in the obvious manner. 
An MMFQ has a unique stationary distribution if and only if $\xi_- e > \xi_+ e$. 
The following matrix analytic
method can be used to compute the stationary distribution \cite{soares3}: 
\begin{itemize}
    \item Compute  the element-wise minimal nonnegative solution of the nonsymmetric algebraic Riccati equation (NARE):
\begin{align}\label{eq:Psiclassic} 
0 = T_{++}\Psi + \Psi T_{-+}\Psi + \Psi T_{--} + T_{+-}, 
\end{align}
which is a stochastic matrix $\Psi$ (when the MMFQ has a stationary
distribution). Various numerically stable algorithms like SDA and ADDA can be used to compute $\Psi$ \cite{guo2,guo3,wangADDA}.
Entry $\Psi_{ij}$ holds the first passage probability that, for any $x > 0$, the background state equals $j \in S_-$ when the MMFQ
first returns to fluid level $x$ provided that it started in state $(x,i)$ with $i \in S_+$ at time $0$.
\item The stationary density vector $\pi(x)=[\pi_+(x),\pi_-(x)]$ of the MMFQ is then given by
\begin{align}
    [\pi_+(x),\pi_-(x)] = p_- T_{-+}^{(0)}e^{Kx}  [I,\Psi], 
\end{align}
for $x > 0$, with $K=T_{++} + \Psi T_{-+}$. The vector $p_-$ is the invariant vector
of the generator matrix $T_{--}^{(0)} + T_{-+}^{(0)} \Psi$ (which corresponds to censoring the Markov process
on the states $\{(0,i)|i\in S_-\}$), normalized such that
we get a distribution.
\end{itemize}
An illustration of a sample path of a classic MMFQ with $S=\{1,2,3,4\}$ is shown in Figure \ref{fig:MMFQevo}.
At time $1.2$ the background process moves from state $1$ to $2$ (both in $S_+$) and the fluid continues to increase, while
at time $2.1$ a move to state $4 \in S_-$ occurs and the fluid level starts to decrease.

\begin{figure}
    \centering
\begin{tikzpicture}[x=1.2cm,y=1cm,>=Latex,
  axis/.style={thin},
  inc/.style={very thick},             
  dec/.style={very thick,dashed},      
  lab/.style={font=\small}]
\def\xstart{0.5} 
\def\dA{1.2}  
\def\dB{0.9}  
\def\dC{1.4}  
\def\dD{0.7}  
\def\dE{1.8}  
\def\dF{1.2}  
\def\dG{1.3}  

\coordinate (P0) at (0,\xstart);

\draw[inc] (P0) -- node[above,sloped,lab]{+1} ++(\dA,\dA) coordinate (P1);
\draw[inc] (P1) -- node[above,sloped,lab]{+1} ++(\dB,\dB) coordinate (P2);
\draw[dec] (P2) -- node[above,sloped,lab]{-1} ++(\dC,-\dC) coordinate (P3);
\draw[inc] (P3) -- node[above,sloped,lab]{+1} ++(\dD,\dD) coordinate (P4);
\draw[inc] (P4) -- node[above,sloped,lab]{+1} ++(\dE,\dE) coordinate (P5);
\draw[dec] (P5) -- node[above,sloped,lab]{-1} ++(\dF,-\dF) coordinate (P6);
\draw[dec] (P6) -- node[above,sloped,lab]{-1} ++(\dG,-\dG) coordinate (P7);

\path[fill=black!10]
  (P0 |- 0,0) -- (P0) -- (P1) -- (P2) -- (P3) -- (P4) -- (P5) -- (P6) -- (P7) --  (P7 |- 0,0) -- cycle;

\draw[axis,->] (0,0) -- ($(P7 |- 0,0) + (0.6,0)$) node[below right] {$t$};
\draw[axis,->] (0,0) -- (0,5) node[left] {};


\foreach \Q in {P1,P2,P3,P4,P5,P6}
  \draw[axis,dashed] (\Q |- 0,0) -- (\Q |- 0,4.9);

\node[lab] at ($(P0 |- 0,4)!0.5!(P1 |- 0,4)$) {1};
\node[lab] at ($(P1 |- 0,4)!0.5!(P2 |- 0,4)$) {2};
\node[lab] at ($(P2 |- 0,4)!0.5!(P3 |- 0,4)$) {4};
\node[lab] at ($(P3 |- 0,4)!0.5!(P4 |- 0,4)$) {2};
\node[lab] at ($(P4 |- 0,4)!0.5!(P5 |- 0,4)$) {1};
\node[lab] at ($(P5 |- 0,4)!0.5!(P6 |- 0,4)$) {3};
\node[lab] at ($(P6 |- 0,4)!0.5!(P7 |- 0,4)$) {4};

\node[lab, anchor=west, text width=2cm] at ($(P7 |- 0,4) + (0.3,0)$) {background state};
\node[lab, anchor=west] at ($(P7) + (0.3,0)$) {fluid level};

\node[lab, anchor=north] at ($(P0 |- 0,0)$) {0};
\node[lab, anchor=north] at ($(P1 |- 0,0)$) {\dA};
\node[lab, anchor=north] at ($(P2 |- 0,0)$) {$\pgfmathparse{\dA+\dB}\pgfmathprintnumber{\pgfmathresult}$};
\node[lab, anchor=north] at ($(P3 |- 0,0)$) {$\pgfmathparse{\dA+\dB+\dC}\pgfmathprintnumber{\pgfmathresult}$};
\node[lab, anchor=north] at ($(P4 |- 0,0)$) {$\pgfmathparse{\dA+\dB+\dC+\dD}\pgfmathprintnumber{\pgfmathresult}$};
\node[lab, anchor=north] at ($(P5 |- 0,0)$) {$\pgfmathparse{\dA+\dB+\dC+\dD+\dE}\pgfmathprintnumber{\pgfmathresult}$};
\node[lab, anchor=north] at ($(P6 |- 0,0)$) {$\pgfmathparse{\dA+\dB+\dC+\dD+\dE+\dF}\pgfmathprintnumber{\pgfmathresult}$};
\end{tikzpicture}
    \caption{Illustration of a sample path of an MMFQ with $S_+=\{1,2\}$ and $S_-=\{3,4\}$, background transitions occur
    at times $1.2, 2.1, 3.5, 4.2, 6$ and $7.2$.}
    \label{fig:MMFQevo}
\end{figure}
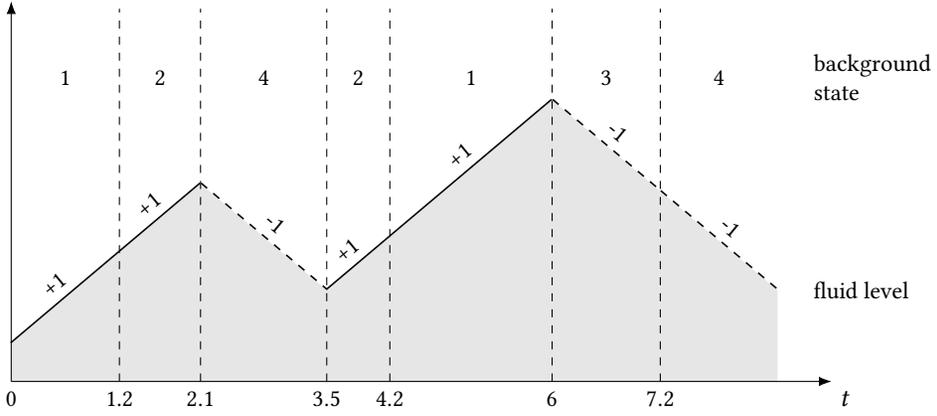

\section{The colored Markov-modulated fluid queue}\label{sec:coloredMMFQ}

\paragraph{The state space}
We now introduce the colored MMFQ with $C$ colors that reduces to the classic MMFQ when $C=1$.
It is a Markov process defined on the state space $\Omega$ of the form 
\[ \Omega = \{(0,i)| i \in S_-\} \cup \left( \bigcup_{c=1}^C \Omega_c\right)\]
with \[ \Omega_c = 
\left\{(x_1,x_2,\ldots,x_C,i) | x_1,\ldots,x_c \geq 0, x_c > 0,
x_{c+1}=\ldots=x_C = 0, i \in S_- \cup S_+^{(c)} \right\},\]
where $S = S_- \cup (\cup_{c=1}^C S^{(c)}_+)$ is the finite set of states of the background process.
The idea of state $(x_1,x_2,\ldots,x_C,i) \in \Omega_c$ is as follows: the total amount of fluid in the fluid
queue equals $\sum_{c=1}^C x_c$ and this amount consists of $x_1$ fluid of color $1$, followed by $x_2$ fluid
of color $2$, etc. The color of the fluid at the top is $c$ when the state is in $\Omega_c$ (as $x_{c+1},\ldots,
x_C$ are zero). Notice that the fluid in the fluid queue is always {\bf ordered by color}, with color $1$
at the bottom (if $x_1 > 0$) and color $c$ at the top for a state in $\Omega_c$. Furthermore, some
of the values in $\{x_1,\ldots,x_{c-1}\}$ can be zero for states in $\Omega_c$. It should also be feasible
to construct colored MMFQs where the colors are not necessarily ordered, but this poses many additional technical challenges
beyond the scope of this paper.

For the background state $i$, we see that a common set of states $S_-$ is used for all colors,
while a color dependent set $S_+^{(c)}$ of states is used when the fluid increases. 
One of the features of a colored MMFQ that makes it a true generalization of a classic MMFQ 
lies in the fact that {\bf the rate matrix of the background process when the fluid decreases depends on the color of the fluid on top}, that is, instead of having a single $T_{--}$ matrix, a colored MMFQ relies on $C$ matrices $T_{--}^{(1)}$ to $T_{--}^{(C)}$. 
In Appendix \ref{app:special2} we show that if the color of the fluid on top is ignored while the fluid goes down,
then a colored MMFQ reduces to a classic MMFQ.
It is possible to generalize the results in this paper such that the set of background states is also
color dependent when the fluid decreases. In this case some additional transition probability matrices are needed that
specify the change in the state of the background process when a color boundary is crossed when the fluid decreases.
This however should not pose any problems.

\paragraph{The evolution of a colored MMFQ}
When the state $s=(x_1,\ldots,x_C,i) \in \Omega_c$ at time $t$, we refer to $x_1+\ldots+x_c$ as the fluid level and to
$i$ as the background state. Unless a background transition occurs, a colored MMFQ evolves as follows:
\begin{itemize}
    \item when $s \in \cup_{c=1}^C \Omega_c$ and $i \in S_-$ the fluid level decreases at
rate $1$,
    \item when $s \in \Omega_c$ and $i \in S^{(c)}_+$ the fluid level increases at
rate $1$,
\item when $s=(0,i)$ with $i \in S_-$, the state remains the same.
\end{itemize}
When the state $s=(x_1,\ldots,x_C,i) \in \Omega_c$ with $i \in S_-$ and the value of $x_c$ hits zero,
then the state of the colored MMFQ becomes part of the set $\Omega_{c'}$ with $c' = \max \{k | x_k > 0, k < c\}$ or
becomes equal to $(0,i)$ if $x_1=\ldots=x_{c-1}=0$. In the first case the color at the top becomes $c'$, in the second
case the fluid level becomes $0$.

We define two kinds of background transitions in a colored MMFQ: {\bf Transitions of the first kind} are within the set $\Omega_c$, for some $c \in \{1,\ldots,C\}$.
These background transitions change the state $s$ from $s=(x_1,\ldots,x_C,i) \in \Omega_c$ to 
state $s'=(x_1,\ldots,x_C,j) \in \Omega_c$ and are characterized by the matrices:
\begin{itemize}
    \item $(T^{(c)}_{--})_{ij}$: rate of moving from state $s$ to $s'$  with $i,j \in S_-$,
\item $(T^{(c)}_{-+})_{ij}$: rate of moving from state $s$ to $s'$  with $i \in S_-, j \in S_+^{(c)}$,
\item $(T^{(c)}_{++})_{ij}$ : rate of moving from state $s$ to $s'$ with $i,j \in S_+^{(c)}$,
\item $(T^{(c)}_{+-})_{ij}$: rate of moving from state $s$ to $s'$ with $i \in S_+^{(c)}, j \in S_-$.
\end{itemize}
It is important to note here that all these rates depend on the color $c$ of the fluid at the top, 
that is, they depend on $c$ when $s \in \Omega_c$, even when the
fluid decreases. This means that if the color at the top changes during an interval where the fluid decreases
then so does the rate matrix associated to the above mentioned background transitions.

{\bf Background transitions of the second kind} are the ones that correspond to a transition 
from a state $s=(x_1,\ldots,x_C,i) \in \Omega_c$
out of the set $\Omega_c$. These are characterized by
\begin{itemize}
    \item $(T_{-+}^{(c,c')})_{ij}$ with $c' > c$: rate of moving from state $s$ to 
    $s'=(x_1,\ldots,x_c,0,\ldots,0,0^+,0,\ldots,0,j)$ with $i \in S_-, j\in S_+^{(c')}$,
    where the $0^+$ occurs in position $c'$. Such a background transition initiates an increase in the fluid level 
    using color $c' > c$.
    \item $(T_{++}^{(c,c')})_{ij}$ with $c' > c$: rate of moving from state $s$ to 
    $s'=(x_1,\ldots,x_c,0,\ldots,0,0^+,0,\ldots,0,j)$ with $i \in S_+^{(c)}, j\in S_+^{(c')}$.
    In this case the fluid continues to increase, but the color being added changes from $c$ to $c'$.
\end{itemize}
We demand that $c' > c$ such that the colors of the fluid remain ordered.

Background transitions of the first and second kind can also occur in the boundary states $(0,i)$ and are characterized by
\begin{itemize}
    \item $(T_{--}^{(0)})_{ij}$: rate of moving from state $(0,i)$ to $(0,j)$ with $i,j \in S_-$,
    \item $(T_{-+}^{(0,c)})_{ij}$: rate of moving from state $(0,i)$ to $(0,\ldots,0,0^+,0\ldots,0,j)$ with 
    $i \in S_-, j\in S_+^{(c)}$,
    which corresponds to a background transition that initiates an increase in the fluid level away from zero using color $c$.
\end{itemize}

An illustration of a sample path of a colored MMFQ with $2$ colors is presented in Figure \ref{fig:MMFQevo2}.
The set $S=\{1,2,3,4,5,6\}$ with $S_+^{(1)}=\{1\}, S_+^{(2)}=\{2,3,4\}$ and $S_-=\{5,6\}$.
A background transition of the second kind occurs at time $1.2$, while a background transition of the first kind takes place at time $2.1$.
At time $3$ the fluid at the top changes color, meaning the rate matrix of the background process changes
from $T_{--}^{(2)}$ to $T_{--}^{(1)}$.

\begin{figure}
    \centering
\begin{tikzpicture}[x=1.2cm,y=1cm,>=Latex,
  axis/.style={thin},
  inc/.style={very thick},             
  dec/.style={very thick,dashed},      
  lab/.style={font=\small}]
\def\xstart{0.5} 
\def\dA{1.2}  
\def\dB{0.9}  
\def\dC{1.5}  
\def\dD{1.1}  
\def\dE{1}  
\def\dF{0.7}  
\def\dG{0.6}  
\def\dH{1.8}  

\coordinate (P0) at (0,\xstart);

\draw[inc] (P0) -- node[above,sloped,lab]{+1} ++(\dA,\dA) coordinate (P1);
\draw[inc,red] (P1) -- node[above,sloped,lab]{+1} ++(\dB,\dB) coordinate (P2);
\draw[dec,red] (P2) -- node[above,sloped,lab]{-1} ++(\dB,-\dB) coordinate (P3a);
\draw[dec] (P3a) -- node[above,sloped,lab]{-1} ++(\dC-\dB,-\dC+\dB) coordinate (P3);
\draw[inc] (P3) -- node[above,sloped,lab]{+1} ++(\dD,\dD) coordinate (P4);
\draw[inc,red] (P4) -- node[above,sloped,lab]{+1} ++(\dE,\dE) coordinate (P5);
\draw[dec,red] (P5) -- node[above,sloped,lab]{-1} ++(\dF,-\dF) coordinate (P6);
\draw[inc,red] (P6) -- node[above,sloped,lab]{+1} ++(\dG,\dG) coordinate (P7);
\draw[dec,red] (P7) -- node[above,sloped,lab]{-1} ++(0.9,-0.9) coordinate (P7a);
\draw[dec] (P7a) -- node[above,sloped,lab]{-1} ++(\dH-0.9,-\dH+0.9) coordinate (P8);

\path[fill=black!10]
  (P0 |- 0,0) -- (P0) -- (P1) -- (P3a) -- (P3) -- (P4) -- (P7a) -- (P8) -- (P8 |- 0,0) -- cycle;

\path[fill=red!10]
 (P1) -- (P2) -- (P3a) -- cycle;

\path[fill=red!10]
 (P4) -- (P5) -- (P6) -- (P7) -- (P7a) -- cycle;

\draw[dotted] (P1) -- (P3a);
\draw[dotted] (P3a) -- ($(P3a |- 0,0)$);
\draw[dotted] (P4) -- (P7a);
\draw[dotted] (P7a) -- ($(P7a |- 0,0)$);

\draw[axis,->] (0,0) -- ($(P8 |- 0,0) + (0.6,0)$) node[below right] {$t$};
\draw[axis,->] (0,0) -- (0,5) node[left] {};


\foreach \Q in {P1,P2,P3,P4,P5,P6,P7}
  \draw[axis,dashed] (\Q |- 0,0) -- (\Q |- 0,4.9);

\node[lab] at ($(P0 |- 0,4)!0.5!(P1 |- 0,4)$) {1};
\node[lab] at ($(P1 |- 0,4)!0.5!(P2 |- 0,4)$) {3};
\node[lab] at ($(P2 |- 0,4)!0.5!(P3 |- 0,4)$) {6};
\node[lab] at ($(P3 |- 0,4)!0.5!(P4 |- 0,4)$) {1};
\node[lab] at ($(P4 |- 0,4)!0.5!(P5 |- 0,4)$) {4};
\node[lab] at ($(P5 |- 0,4)!0.5!(P6 |- 0,4)$) {5};
\node[lab] at ($(P6 |- 0,4)!0.5!(P7 |- 0,4)$) {2};
\node[lab] at ($(P7 |- 0,4)!0.5!(P8 |- 0,4)$) {5};

\node[lab, anchor=west,text width=2cm] at ($(P8 |- 0,4) + (0.3,0)$) {background state};
\node[lab, anchor=west] at ($(P8) + (0.3,0)$) {fluid level};

\node[lab, anchor=north] at ($(P0 |- 0,0)$) {0};
\node[lab, anchor=north] at ($(P1 |- 0,0)$) {\dA};
\node[lab, anchor=north] at ($(P2 |- 0,0)$) {$\pgfmathparse{\dA+\dB}\pgfmathprintnumber{\pgfmathresult}$};
\node[lab, anchor=north] at ($(P3 |- 0,0)$) {$\pgfmathparse{\dA+\dB+\dC}\pgfmathprintnumber{\pgfmathresult}$};
\node[lab, red, anchor=north] at ($(P3a |- 0,0)$) {$\pgfmathparse{\dA+\dB+\dB}\pgfmathprintnumber{\pgfmathresult}$};
\node[lab, anchor=north] at ($(P4 |- 0,0)$) {$\pgfmathparse{\dA+\dB+\dC+\dD}\pgfmathprintnumber{\pgfmathresult}$};
\node[lab, anchor=north] at ($(P5 |- 0,0)$) {$\pgfmathparse{\dA+\dB+\dC+\dD+\dE}\pgfmathprintnumber{\pgfmathresult}$};
\node[lab, anchor=north] at ($(P6 |- 0,0)$) {$\pgfmathparse{\dA+\dB+\dC+\dD+\dE+\dF}\pgfmathprintnumber{\pgfmathresult}$};
\node[lab, anchor=north] at ($(P7 |- 0,0)$) {$\pgfmathparse{\dA+\dB+\dC+\dD+\dE+\dF+\dG}\pgfmathprintnumber{\pgfmathresult}$};
\node[lab, red, anchor=north] at ($(P7a |- 0,0)$) {$\pgfmathparse{\dA+\dB+\dC+\dD+\dE+\dF+\dG+0.9}\pgfmathprintnumber{\pgfmathresult}$};

\end{tikzpicture}
    \caption{Illustration of a sample path of a $2$-colored MMFQ with $S^{(1)}_+=\{1\}$, $S^{(2)}_+=\{2,3,4\}$ and $S_-=\{5,6\}$, background transitions occur
    at times $1.2, 2.1, 3.6, 4.7, 5.7, 6.4$ and $7$. There is no background transition at times $3$ and $7.9$, but the rate matrix of the background
    process does change from $T_{--}^{(2)}$ to $T_{--}^{(1)}$.}
    \label{fig:MMFQevo2}
\end{figure}
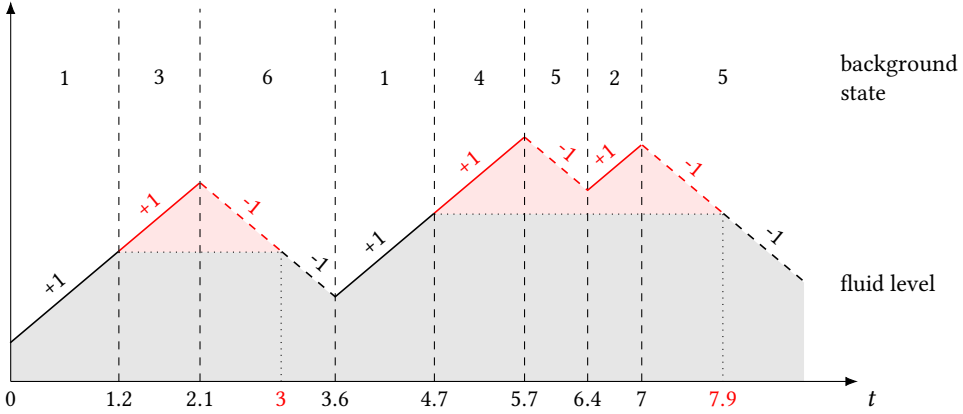

\section{Stationary distribution of a colored Markov-modulated fluid queue}\label{sec:stationary}

In this section we present our main result, the proof of which is deferred to Section \ref{sec:proof}.
We first introduce a set of $C$ matrices $\Psi_1, \ldots,\Psi_C$. These matrices hold the first passage
probabilities for the $C$ colors of the colored MMFQ. More specifically, entry $(\Psi_c)_{ij}$ holds the first passage probability that, for any $\vec x = (x_1,\ldots,x_c,0,\ldots,0) \in \mathbb{R}^C$ with $x_c > 0$, the background state equals $j \in S_-$ when the MMFQ
first returns to fluid level $x=\sum_{c=1}^C x_c$ provided that it started in state $s=(\vec x,i) \in \Omega_c$ with 
$i \in S^{(c)}_+$ at time $0$.
Note that when the fluid level returns to level $x$ via some state $s'=(\vec x',j)$, we must have $\vec x' = \vec x$. 
If the colored MMFQ is positive (Harris) recurrent, then the matrices  $\Psi_1,\ldots,\Psi_C$ are stochastic matrices. 

We can compute the $\Psi_c$ matrices, for $c=1,\ldots,C$, using a backward recursion by first solving
\begin{align}\label{eq:NAREC}
  T_{+-}^{(C)} + T_{++}^{(C)}&\Psi_C
    +\Psi_CT_{--}^{(C)}+
    \Psi_C T_{-+}^{(C)}\Psi_C  = 0.
\end{align}
and subsequently solving
\begin{align}\label{eq:NAREs}
    T_{++}^{(c)}\Psi_c + & \left( T_{+-}^{(c)} + \sum_{\ell > c} T_{++}^{(c,\ell)}\Psi_\ell \right)
    +\Psi_cT_{-+}^{(c)}\Psi_c +
   \Psi_c \left(T_{--}^{(c)}+\sum_{\ell > k} T_{-+}^{(c,\ell)}\Psi_\ell \right) = 0,
\end{align}
for $c=C-1,\ldots,1$. The equation for $\Psi_C$ is identical to \eqref{eq:Psiclassic}, except that a superscript $^{(C)}$ is added to the
rate matrices $T_{--},T_{-+},T_{+-}$ and $T_{++}$. This is as expected as the colored MMFQ cannot make any background transitions
of the second kind when the color is $C$ and therefore it evolves in the same manner as a classic MMFQ as long as 
the fluid level remains above $x$ (so the
color on top remains equal to $C$).

Define
\begin{align*}
    \tilde T_{++}^{(c)} =  T_{++}^{(c)}, \ \ \  \ \ \ \  \tilde T_{+-}^{(c)} = T_{+-}^{(c)} + \sum_{\ell > c} T_{++}^{(c,\ell)}\Psi_\ell,  \ \ \  \ \ \ \ 
    \tilde T_{-+}^{(c)} =  T_{-+}^{(c)}, \ \ \  \ \ \ \  \tilde T_{--}^{(c)} = T_{--}^{(c)}+\sum_{\ell > c} T_{-+}^{(c,\ell)}\Psi_\ell,
\end{align*}
then \eqref{eq:NAREs} can be rewritten as 
\begin{align*}
    \tilde T_{++}^{(c)}\Psi_c + & \tilde T_{+-}^{(c)} 
    +\Psi_c \tilde T_{-+}^{(c)}\Psi_c +
   \Psi_c  \tilde T_{--}^{(c)} = 0,
\end{align*}
for $c=C-1,\ldots,1$.
Imagine that the colored MMFQ was in state $s=(x_1,\ldots,x_C,i) \in \Omega_k$ at time $0$. Equation \eqref{eq:NAREs} can now be understood by noting that the $4$ matrices $\tilde T_{--}^{(c)},\tilde T_{-+}^{(c)},\tilde T_{+-}^{(c)}$ and $\tilde T_{++}^{(c)}$ capture the evolution
of the colored MMFQ until the first return to level $x=\sum_{c=1}^C x_c$ if we censor out the time intervals where the fluid color becomes
part of the set $\{c+1,\ldots,C\}$. For instance, when using just $2$ colors a background transition from  state $s=(x_1,0,i)$ with $i \in S_+^{(1)}$ to state $s'=(x_1,0,j)$ with $j \in S_-$ can occur in two manners in the censored process:
\begin{itemize}
    \item At rate $(T_{+-})_{ij}$ the background transition is immediate and no censoring occurs.
    \item A background transition of the second kind occurs to some state $s''=(x_1,0^+,v)$ for some $v \in S^{(2)}_+$
    and the first return to fluid level $x_1$ occurs via state $(x_1,0,j)$. This event occurs at rate 
    $\sum_v (T_{++}^{(1,2)})_{iv} (\Psi_2)_{vj} = (T_{++}^{(1,2)}\Psi_2)_{ij}$.  
\end{itemize}
Similarly a background transition from  state $s=(x_1,0,i)$ with $i \in S_-$ to state $s'=(x_1,0,j)$ with $j \in S_-$ can occur in two manners in the censored process (simply replace $(T_{+-})_{ij}$  by $(T_{--})_{ij}$ and   $T_{++}^{(1,2)}$ by $T_{-+}^{(1,2)}$).
This is illustrated in Figure \ref{fig:MMFQevo3}.

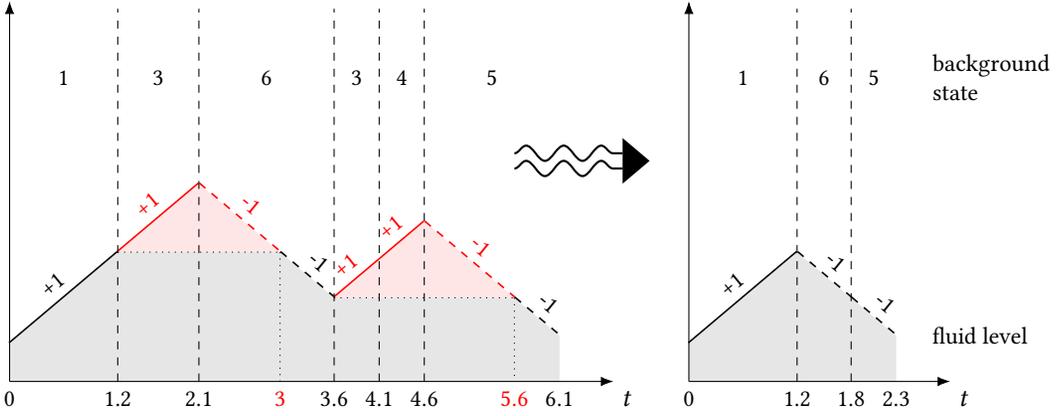
\begin{figure}
\centering
\begin{minipage}{0.6\textwidth}
\begin{tikzpicture}[x=1.2cm,y=1cm,>=Latex,
  axis/.style={thin},
  inc/.style={very thick},             
  dec/.style={very thick,dashed},      
  lab/.style={font=\small}]
\def\xstart{0.5} 
\def\dA{1.2}  
\def\dB{0.9}  
\def\dC{1.5}  
\def\dD{0.5}  
\def\dE{0.5}  
\def\dF{1}  
\def\dG{0.5}  

\coordinate (P0) at (0,\xstart);

\draw[inc] (P0) -- node[above,sloped,lab]{+1} ++(\dA,\dA) coordinate (P1);
\draw[inc,red] (P1) -- node[above,sloped,lab]{+1} ++(\dB,\dB) coordinate (P2);
\draw[dec,red] (P2) -- node[above,sloped,lab]{-1} ++(\dB,-\dB) coordinate (P3a);
\draw[dec] (P3a) -- node[above,sloped,lab]{-1} ++(\dC-\dB,-\dC+\dB) coordinate (P3);
\draw[inc,red] (P3) -- node[above,sloped,lab]{+1} ++(\dD,\dD) coordinate (P4);
\draw[inc,red] (P4) -- node[above,sloped,lab]{+1} ++(\dE,\dE) coordinate (P5);
\draw[dec,red] (P5) -- node[above,sloped,lab]{-1} ++(\dF,-\dF) coordinate (P6a);
\draw[dec] (P6a) -- node[above,sloped,lab]{-1} ++(\dG,-\dG) coordinate (P6);

\path[fill=black!10]
  (P0 |- 0,0) -- (P0) -- (P1) -- (P3a) -- (P3) -- (P6a) -- (P6) -- (P6 |- 0,0) -- cycle;

\path[fill=red!10]
 (P1) -- (P2) -- (P3a) -- cycle;

\path[fill=red!10]
 (P3) -- (P5) -- (P6a)  -- cycle;

\draw[dotted] (P1) -- (P3a);
\draw[dotted] (P3a) -- ($(P3a |- 0,0)$);
\draw[dotted] (P3) -- (P6a);
\draw[dotted] (P6a) -- ($(P6a |- 0,0)$);

\draw[axis,->] (0,0) -- ($(P6 |- 0,0) + (0.6,0)$) node[below right] {$t$};
\draw[axis,->] (0,0) -- (0,5) node[left] {};


\foreach \Q in {P1,P2,P3,P4,P5}
  \draw[axis,dashed] (\Q |- 0,0) -- (\Q |- 0,4.9);

\node[lab] at ($(P0 |- 0,4)!0.5!(P1 |- 0,4)$) {1};
\node[lab] at ($(P1 |- 0,4)!0.5!(P2 |- 0,4)$) {3};
\node[lab] at ($(P2 |- 0,4)!0.5!(P3 |- 0,4)$) {6};
\node[lab] at ($(P3 |- 0,4)!0.5!(P4 |- 0,4)$) {3};
\node[lab] at ($(P4 |- 0,4)!0.5!(P5 |- 0,4)$) {4};
\node[lab] at ($(P5 |- 0,4)!0.5!(P6 |- 0,4)$) {5};

\tikzset{wobbly/.style={decorate, decoration={snake, amplitude=1mm, segment length=5mm}, thick}}

\draw[wobbly] ($(P6 |- 0,2.8) - (0.5,0)$) -- ($(P6 |- 0,2.8) + (0.7,0)$) node[midway,above]{};
\draw[wobbly] ($(P6 |- 0,3) - (0.5,0)$) -- ($(P6 |- 0,3) + (0.7,0)$) node[midway,above]{};
\path[fill=black]
 ($(P6 |- 0,2.8) + (0.7,-0.2)$) -- ($(P6 |- 0,3) + (0.7,0.2)$) -- ($(P6 |- 0,2.9) + (1,0)$)  -- cycle;


\node[lab, anchor=north] at ($(P0 |- 0,0)$) {0};
\node[lab, anchor=north] at ($(P1 |- 0,0)$) {\dA};
\node[lab, anchor=north] at ($(P2 |- 0,0)$) {$\pgfmathparse{\dA+\dB}\pgfmathprintnumber{\pgfmathresult}$};
\node[lab, anchor=north] at ($(P3 |- 0,0)$) {$\pgfmathparse{\dA+\dB+\dC}\pgfmathprintnumber{\pgfmathresult}$};
\node[lab, red, anchor=north] at ($(P3a |- 0,0)$) {$\pgfmathparse{\dA+\dB+\dB}\pgfmathprintnumber{\pgfmathresult}$};
\node[lab, anchor=north] at ($(P4 |- 0,0)$) {$\pgfmathparse{\dA+\dB+\dC+\dD}\pgfmathprintnumber{\pgfmathresult}$};
\node[lab, anchor=north] at ($(P5 |- 0,0)$) {$\pgfmathparse{\dA+\dB+\dC+\dD+\dE}\pgfmathprintnumber{\pgfmathresult}$};
\node[lab, red, anchor=north] at ($(P6a |- 0,0)$) {$\pgfmathparse{\dA+\dB+\dC+\dD+\dE+\dF}\pgfmathprintnumber{\pgfmathresult}$};
\node[lab, anchor=north] at ($(P6 |- 0,0)$) {$\pgfmathparse{\dA+\dB+\dC+\dD+\dE+\dF+\dG}\pgfmathprintnumber{\pgfmathresult}$};
\end{tikzpicture}
\end{minipage}
\hfill
\begin{minipage}{0.35\textwidth}
\begin{tikzpicture}[x=1.2cm,y=1cm,>=Latex,
  axis/.style={thin},
  inc/.style={very thick},             
  dec/.style={very thick,dashed},      
  lab/.style={font=\small}]
\def\xstart{0.5} 
\def\dA{1.2}  
\def\dB{0.6}  
\def\dC{0.5}  

\coordinate (P0) at (0,\xstart);

\draw[inc] (P0) -- node[above,sloped,lab]{+1} ++(\dA,\dA) coordinate (P1);
\draw[dec] (P1) -- node[above,sloped,lab]{-1} ++(\dB,-\dB) coordinate (P2);
\draw[dec] (P2) -- node[above,sloped,lab]{-1} ++(\dC,-\dC) coordinate (P3);

\path[fill=black!10]
  (P0 |- 0,0) -- (P0) -- (P1) -- (P2) -- (P3) -- (P3 |- 0,0) -- cycle;

\draw[axis,->] (0,0) -- ($(P3 |- 0,0) + (0.6,0)$) node[below right] {$t$};
\draw[axis,->] (0,0) -- (0,5) node[left] {};


\foreach \Q in {P1,P2}
  \draw[axis,dashed] (\Q |- 0,0) -- (\Q |- 0,4.9);

\node[lab] at ($(P0 |- 0,4)!0.5!(P1 |- 0,4)$) {1};
\node[lab] at ($(P1 |- 0,4)!0.5!(P2 |- 0,4)$) {6};
\node[lab] at ($(P2 |- 0,4)!0.5!(P3 |- 0,4)$) {5};

\node[lab, anchor=west,text width=2cm] at ($(P3 |- 0,4) + (0.3,0)$) {background state};
\node[lab, anchor=west] at ($(P3) + (0.3,0)$) {fluid level};

\node[lab, anchor=north] at ($(P0 |- 0,0)$) {0};
\node[lab, anchor=north] at ($(P1 |- 0,0)$) {\dA};
\node[lab, anchor=north] at ($(P2 |- 0,0)$) {$\pgfmathparse{\dA+\dB}\pgfmathprintnumber{\pgfmathresult}$};
\node[lab, anchor=north] at ($(P3 |- 0,0)$) {$\pgfmathparse{\dA+\dB+\dC}\pgfmathprintnumber{\pgfmathresult}$};
\end{tikzpicture}
\end{minipage}

    \caption{Illustration of a sample path of a $2$-colored MMFQ with $S^{(1)}_+=\{1\}$, $S^{(2)}_+=\{2,3,4\}$ and $S_-=\{5,6\}$.
    Left: original sample path, Right: censored path. Intervals $(1.2,3)$ and $(3.6,5.6)$ of the original sample path
    are censored out.}
    \label{fig:MMFQevo3}
\end{figure}

To state our main theorem we also define the matrices 
\begin{align}\label{eq:Kk}
    K_c = T_{++}^{(c)}+ \Psi_c T_{-+}^{(c)},
\end{align} 
for $c=1,\ldots,C$ and the upper block triangular matrix $K$ of size $\sum_{c=1}^C |S_+^{(c)}|$:
\begin{align}\label{eq:K}
K=T_{++} +\begin{bmatrix}
    \Psi_1  &   \\
     & \ddots &  \\
     &  &  \Psi_C  
\end{bmatrix} T_{-+},
\end{align}
where
\begin{align*}
    T_{++} = \begin{bmatrix}
    T_{++}^{(1)} & T_{++}^{(1,2)} & \ldots & T_{++}^{(1,C)} \\
    0& T_{++}^{(2)} & \ldots & T_{++}^{(2,C)} \\
    \vdots & \ddots & \ddots & \vdots \\
    0 & \ldots & 0 & T_{++}^{(C)} 
\end{bmatrix},  \ \mbox{  and }  \ T_{-+} = \begin{bmatrix}
    T_{-+}^{(1)} & T_{-+}^{(1,2)} & \ldots & T_{-+}^{(1,C)} \\
    0& T_{-+}^{(2)} & \ldots & T_{-+}^{(2,C)} \\
    \vdots & \ddots & \ddots & \vdots \\
    0 & \ldots & 0 & T_{-+}^{(C)}. 
\end{bmatrix}.
\end{align*}

As stated before, positive (Harris) recurrence of the colored MMFQ implies that the $\Psi_k$ matrices are stochastic. 
Positive (Harris) recurrence can be verified as
follows. 
For $\Psi_k$, with $k=1,\ldots,C$, define the null vector $\xi^{(c)} = (\xi^{(c)}_+,\xi^{(c)}_-)$ of
\[ T^{(c)} = \begin{bmatrix}
    T_{++}^{(c)} & T_{+-}^{(c)}+\sum_{\ell > k}  T_{++}^{(k,\ell)}\Psi_\ell\\ 
    T_{-+}^{(c)} & T_{--}^{(C)}+\sum_{\ell > k}  T_{-+}^{(k,\ell)}\Psi_\ell  
\end{bmatrix} = \begin{bmatrix}
    \tilde T_{++}^{(c)} & \tilde T_{+-}^{(c)}\\ 
    \tilde T_{-+}^{(c)} & \tilde T_{--}^{(c)}  
\end{bmatrix}.\]
The colored MMFQ is positive (Harris) recurrent if and only if $\xi^{(c)}_+ e < \xi^{(c)}_- e$, for $c=1,\ldots,C$.
This can be noted as follows. Assume $\xi^{(c)}_+ e \geq \xi^{(c)}_- e$ for some $c$, pick this $c$ as large as possible.
Either $\Psi_c$ is sub-stochastic and the states in $\Omega_c$ are transient, or $\Psi_c$ is stochastic and the mean return time
from a state in $s \in \Omega_c$ to level $x=\sum_i x_i$ is infinite even when not accounting for the time 
when the color at the top of the fluid is in $\{c+1,\ldots,C\}$. Hence, the Markov process is not positive (Harris) recurrent.
If $\xi^{(c)}_+ e < \xi^{(c)}_- e$ does hold for all $c$, then the matrices $K_c$ are all sub-generator matrices and
by Theorem \ref{th:main} the Markov process has a stationary distribution, meaning it is positive (Harris) recurrent.

Let $\vec x = (x_1,\ldots,x_C) \in \mathbb{R}^C$ and let $\pi_-(\vec x)$ be the vector
holding the stationary densities of the states $(\vec x,i)$, for $i \in S_-$. Similarly,  let $\pi_+(\vec x)$ be the vector
holding the stationary densities of the states $(\vec x,i)$, for $i \in S_+^{(c)}$, where $c$ is the largest index such
that $x_c > 0$. Finally, let $p_-$ be the vector holding the stationary  probabilities that the colored MMFQ is
in state $(0,i)$ for $i \in S_-$.

\begin{theorem}\label{th:main}
The stationary densities of the colored fluid queue characterized by the matrices $T_{--}^{(0)}$,$T_{-+}^{(0,c)}$, 
$T_{++}$, $T_{-+}$, $T_{--}^{(c)}$, 
$T_{+-}^{(c)}$, for $c=1,\ldots,C$ can be expressed as follows. Assume $\vec x \in \mathbb{R}^C$ and let
$1 \leq c_1 < c_2 < \ldots < c_n \leq c$ be the subset of indices for which $\vec x$ is positive, then
\begin{align}\label{eq:pix}
    [\pi_+(\vec x),\pi_-(\vec x)] = p_- T_{-+}^{(0,c_1)} \left( \prod_{i=1}^{n-1} e^{K_{c_i}x_{c_i}} 
    \left(T_{++}^{(c_i,c_{i+1})}+\Psi_{c_i} T_{-+}^{(c_i,c_{i+1})}\right) \right) e^{K_{c_n} x_{c_n}} [I,\Psi_{c_n}]. 
\end{align}
The stationary  probability vector $p_-$ solves
\begin{align}\label{eq:pmin}
     p_- \left(T_{--}^{(0)} + \sum_{c=1}^C T_{-+}^{(0,c)} \Psi_c \right)= 0.
\end{align}
This vector is normalized by (where $e$ is a column vector of ones of appropriate size)
\begin{align}\label{eq:norm}
    p_- \left(e +2 [T_{-+}^{(0,1)} \ \ldots \  T_{-+}^{(0,C)}] K^{-1} e \right)= 1.
\end{align} 
\end{theorem}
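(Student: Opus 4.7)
The plan is to prove Theorem~\ref{th:main} by induction on the number of colors $C$. For the base case $C=1$, the colored MMFQ reduces to the classical MMFQ and \eqref{eq:pix}--\eqref{eq:norm} collapse to the Ramaswami-type formulas reviewed in Section~\ref{sec:classic}. This also yields the probabilistic reading of \eqref{eq:pix}: to reach a state with positive fluid components at indices $c_1<\ldots<c_n$, the process starts at a boundary state $(0,j)$ with stationary mass $p_-$, initiates its first ascent using color $c_1$ (factor $T_{-+}^{(0,c_1)}$), accumulates a total upward time $x_{c_k}$ within its color-$c_k$ phases whose censored background dynamics are generated by $K_{c_k}$ (factor $e^{K_{c_k}x_{c_k}}$), transitions to color $c_{k+1}$ either directly from $S_+^{(c_k)}$ via $T_{++}^{(c_k,c_{k+1})}$ or after a first-passage excursion back to $S_-$ captured by $\Psi_{c_k}T_{-+}^{(c_k,c_{k+1})}$, and finally lands either in $S_+^{(c_n)}$ (identity block) or in $S_-$ after a final excursion at the target fluid level (block $\Psi_{c_n}$).

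For the inductive step, assume the theorem holds for $(C-1)$ colors and censor the original process by deleting all time intervals spent in $\Omega_C$. Since every excursion into $\Omega_C$ starting from $(\vec x,i)\in\Omega_c$ with $c<C$ returns to the same fluid level with background kernel $\Psi_C$, the censored process is a $(C-1)$-colored MMFQ with rate matrices $\tilde T_{\pm\pm}^{(c)}$ as defined below \eqref{eq:NAREs}, identical rates of the second kind for $c<c'<C$, and identical boundary data except that $T_{--}^{(0)}$ is replaced by $T_{--}^{(0)}+T_{-+}^{(0,C)}\Psi_C$. A direct substitution shows that its $\Psi_c$'s and $K_c$'s for $c<C$ coincide with those of the original process and that \eqref{eq:pmin} is equivalent to its censored counterpart, so the inductive hypothesis gives the stationary density on $\Omega_1\cup\ldots\cup\Omega_{C-1}$ already in the form \eqref{eq:pix}. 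For the density inside $\Omega_C$ no transitions of the second kind are possible, so in the variable $x_C$ the process behaves as a classical MMFQ driven by $T_{\pm\pm}^{(C)}$; the boundary inflow at $x_C=0^+$ from $\Omega_{c_{n-1}}$ equals $\pi_+(\vec x')T_{++}^{(c_{n-1},C)}+\pi_-(\vec x')T_{-+}^{(c_{n-1},C)}$, and propagating it via $e^{K_Cx_C}[I,\Psi_C]$ (the standard result from Section~\ref{sec:classic} applied to color $C$) reproduces \eqref{eq:pix} for $c_n=C$.

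The normalisation \eqref{eq:norm} is obtained by integrating \eqref{eq:pix} over $\vec x$ and summing over all subsets $\{c_1,\ldots,c_n\}\subseteq\{1,\ldots,C\}$. The per-segment integration $\int_0^\infty e^{K_{c_k}x_{c_k}}dx_{c_k}=-K_{c_k}^{-1}$, together with $[I,\Psi_{c_n}]e=2e$ and the block-triangular structure of $T_{++}$ and $T_{-+}$ in \eqref{eq:K}, makes the sum telescope into the compact form $p_-[T_{-+}^{(0,1)}\ \ldots\ T_{-+}^{(0,C)}]K^{-1}e$, so that adding the boundary mass $p_-e$ and setting the total to one delivers \eqref{eq:norm}. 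The main obstacle lies in the censoring argument: one must verify that any excursion of the continuous-state process into $\Omega_C$ that returns to its starting fluid level has a background kernel given by $\Psi_C$ independently of $\vec x$. This spatial homogeneity of the excursion kernel is what allows the ``collapsed'' process to remain a bona fide MMFQ with only the discrete background modified, and it requires a strong Markov argument together with the level-invariance of the first-passage probabilities that define each $\Psi_c$.
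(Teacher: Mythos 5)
Your strategy is genuinely different from the paper's: the paper does not induct on $C$ or censor $\Omega_C$; it derives the Kolmogorov forward equations for the process (Theorem \ref{th:PDEs}), observes that the stationary set of PDEs has a unique solution given $p_-$ and the relation $\pi_-(x,0^+)=\pi_+(x,0^+)\Psi_2$ (Remark \ref{rem:A}), and then verifies by direct substitution that the candidate \eqref{eq:pix} satisfies every stationary equation, using only the defining NAREs \eqref{eq:NAREC}--\eqref{eq:NAREs} and the definition \eqref{eq:Kk} of $K_c$. Your probabilistic censoring-plus-induction route is attractive, and parts of it are sound: the identification $\Psi_c^{\mathrm{cens}}=\Psi_c$ for $c<C$ does follow by downward recursion from the NAREs, \eqref{eq:pmin} is indeed the censored boundary equation, and the normalization argument (per-segment integration $\int_0^\infty e^{K_{c_i}x}dx=(-K_{c_i})^{-1}$ plus the nilpotent block-triangular Neumann expansion of $(-K)^{-1}$ from \eqref{eq:K}) correctly telescopes the sum over color subsets into \eqref{eq:norm}.

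However, there is a genuine gap at the point where the induction has to close, namely the density inside $\Omega_C$. You assert that the inflow vector $\pi_+(\vec x')T_{++}^{(c_{n-1},C)}+\pi_-(\vec x')T_{-+}^{(c_{n-1},C)}$ can be "propagated via $e^{K_Cx_C}[I,\Psi_C]$" as "the standard result from Section \ref{sec:classic} applied to color $C$". That is not a standard result: the classical formula expresses the density in terms of the stationary probability \emph{mass} accumulated at the hard reflecting boundary of a genuine MMFQ, whereas here the lower edge of the color-$C$ layer is a random configuration $\vec x'$ at which no mass accumulates and from which the process exits the layer. Proving that the stationary density at $(\vec x',x_C)$ factorizes as the entry density at $(\vec x',0^+)$ times $e^{K_Cx_C}[I,\Psi_C]$ is essentially the content of the theorem itself; it requires either an excursion/occupation-measure argument or exactly the stationary balance equations \eqref{eq:y>0_-}--\eqref{eq:boundary4} together with the uniqueness statement of Remark \ref{rem:A} that the paper supplies. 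You flag the strong-Markov/level-invariance issue for the censoring step, but the same unproven machinery is needed for this top-layer step, so the induction does not close as written. A secondary imprecision: censoring only $\Omega_C$ yields a $(C-1)$-colored MMFQ whose first-kind matrices are $T_{--}^{(c)}+T_{-+}^{(c,C)}\Psi_C$ and $T_{+-}^{(c)}+T_{++}^{(c,C)}\Psi_C$ (only the $\ell=C$ terms), with the second-kind transitions to $c'<C$ retained; using the full $\tilde T^{(c)}$ matrices (which sum over all $\ell>c$) while also keeping those second-kind rates would double count. Finally, the proportionality of stationary densities under censoring for this continuous-state piecewise-deterministic process is invoked but not justified; the paper's PDE route avoids needing it for Theorem \ref{th:main}.
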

\begin{proof}
The proof is presented in Section \ref{sec:proof}.    
\end{proof}

The next corollary indicates how to compute some important performance measures such as the distribution of the
fluid level and the distribution of the color on top of the fluid queue. 

\begin{corollary}
Let $\Xi$ be the random variable representing the fluid level defined as $\sum_{c=1}^C x_c$ for state $s=(\vec x,i)$ and
let $\Gamma$ be the random variable representing the color at the top of the fluid given by $\max \{c | x_c > 0\}$ for
state $s=(\vec x,i)$ and by $0$ for state $s=(0,i)$, then
\begin{align}\label{eq:Xi}
    P[\Xi \leq x] &= p_- \left(e +2 [T_{-+}^{(0,1)} \ \ldots \  T_{-+}^{(0,C)}] e^{Kx} e \right), \\
    P[\Gamma = c] &=  2p_- [T_{-+}^{(0,1)} \ \ldots \  T_{-+}^{(0,C)}] (-K)^{-1} w_c, 
    \label{eq:Gamma}
\end{align}
with $P[\Gamma = 0]=p_-e$ and $w_c$ is a size $\sum_{k=1}^{C} |S_+^{(k)}|$ column vector with
\[
(w_c)_i = \left\{ \begin{array}{ll}
     1& \sum_{k=1}^{c-1} |S_+^{(k)}|< i \leq \sum_{k=1}^{c} |S_+^{(k)}|\\
     0& otherwise.
\end{array}\right.
\]
\end{corollary}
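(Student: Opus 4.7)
The plan is to derive both formulas by integrating the density \eqref{eq:pix} over appropriate subsets of $\vec x$ and recognizing the resulting sums-over-chains-of-colors as block-matrix identities. A key preliminary observation is that, under positive recurrence, each $\Psi_c$ is stochastic, so $\Psi_c e = e$ and $[I,\Psi_{c_n}]e = 2e$. Summing \eqref{eq:pix} over the background coordinate therefore yields
\[
[\pi_+(\vec x),\pi_-(\vec x)]\,e \;=\; 2\,p_-\,T_{-+}^{(0,c_1)}\, e^{K_{c_1}x_{c_1}} R_{c_1,c_2}\,e^{K_{c_2}x_{c_2}} \cdots R_{c_{n-1},c_n}\,e^{K_{c_n}x_{c_n}}\,e ,
\]
where $R_{c,c'} = T_{++}^{(c,c')}+\Psi_c T_{-+}^{(c,c')}$ is exactly the $(c,c')$ off-diagonal block of $K$ and $K_c$ is its $c$-th diagonal block.

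For \eqref{eq:Gamma}, I would sum the above over all $\vec x$ with $\Gamma=c$, i.e.\ over chains $c_1<\cdots<c_n=c$ with each $x_{c_i}\in(0,\infty)$ integrated independently. Each one-dimensional integral collapses via $\int_0^\infty e^{K_{c_i}x}dx = (-K_{c_i})^{-1}$, where invertibility follows from the positive recurrence condition making each $K_{c_i}$ a strict sub-generator. The sum over all chains ending at $c$ of the resulting alternating products of $(-K_{c_i})^{-1}$ and $R_{c_i,c_{i+1}}$ is precisely the $(c_1,c)$-block of $(-K)^{-1}$, via the Neumann expansion $(-K)^{-1}=\sum_{k\ge 0}\bigl((-D)^{-1}N\bigr)^k(-D)^{-1}$ applied to the decomposition $-K=-D-N$ with $D=\mathrm{diag}(K_1,\ldots,K_C)$ and $N$ strictly block upper triangular, hence nilpotent of index $\le C$. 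Left-multiplying by $p_-[T_{-+}^{(0,1)}\,\ldots\,T_{-+}^{(0,C)}]$ and post-multiplying by $w_c$, which selects and sums the $c$-th block column, then yields \eqref{eq:Gamma}.

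For \eqref{eq:Xi} my plan is analogous: first derive the density $f_\Xi(y)$ of the total fluid at level $y>0$ by summing over $\vec x$ with $\sum_i x_{c_i}=y$, then integrate in $y$. Here I would use the Duhamel (variation-of-constants) expansion of the matrix exponential of the block upper triangular $K$: the sum over chains $c_1<\cdots<c_n$ of the integrals on the simplex $\{x_{c_i}\ge 0,\,\sum x_{c_i}=y\}$ reproduces exactly the $(c_1,c_n)$-block of $e^{Ky}$. Hence $f_\Xi(y)=2\,p_-[T_{-+}^{(0,1)}\,\ldots\,T_{-+}^{(0,C)}]\,e^{Ky}\,e$, and the claimed expression for the distribution of $\Xi$ follows after integrating in $y$ and simplifying using $\int_0^\infty e^{Ky}dy=(-K)^{-1}$ together with the normalization \eqref{eq:norm}.

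The main obstacle will be establishing the two block-matrix identities cleanly: the Neumann expansion of $(-K)^{-1}$ used for $\Gamma$, and the Duhamel expansion of $e^{Ky}$ used for $\Xi$. Both rely on the nilpotence of the strictly upper triangular part $N$ of $K$ and reduce to indexing the nonzero contributions by increasing chains $c_1<\cdots<c_n$; this index bookkeeping is where the main technical care is needed, while the measure-theoretic manipulations (Fubini, change of variables on the simplex) are routine.
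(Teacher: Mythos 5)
Your approach is essentially the paper's: both arguments integrate the stationary density \eqref{eq:pix} over the relevant region, use stochasticity of the $\Psi_c$ so that $[I,\Psi_{c_n}]e=2e$, and identify the resulting sums over increasing color chains with blocks of $e^{Kx}$ and of $(-K)^{-1}$ for the block upper triangular $K$ of \eqref{eq:K}; your Neumann and Duhamel expansions just make explicit the chain bookkeeping the paper leaves implicit, and your derivation of \eqref{eq:Gamma} is correct.

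One caveat concerning \eqref{eq:Xi}: carrying out your own plan, $\int_0^x e^{Ky}\,dy = K^{-1}(e^{Kx}-I)=(-K)^{-1}(I-e^{Kx})$, so what you actually obtain is $P[\Xi\le x]=p_-e+2\,p_-[T_{-+}^{(0,1)}\ \ldots\ T_{-+}^{(0,C)}](-K)^{-1}(I-e^{Kx})e$, equivalently $1-2\,p_-[T_{-+}^{(0,1)}\ \ldots\ T_{-+}^{(0,C)}]e^{Kx}(-K)^{-1}e$ once the normalization is used with $(-K)^{-1}$. This is not literally the displayed \eqref{eq:Xi}, which lacks the factor $(-K)^{-1}$ and does not tend to $1$ as $x\to\infty$; the mismatch traces to an apparent typographical slip in the statement (and the sign convention in \eqref{eq:norm}) rather than to a defect in your method. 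You should therefore state the expression your integration actually yields instead of asserting that the printed formula ``follows after integrating and simplifying,'' since that last step cannot be made to produce \eqref{eq:Xi} as written.
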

\begin{proof}
The expression for   $P[\Xi \leq x]$ follows from \eqref{eq:pix} and \eqref{eq:K} by noting that $[I,\Psi_{c_n}]e=2e$ and
\[ P[\Xi \leq x] = p_-e + \int_{\{\vec x \ | \ \sum_{c=1}^C x_c \leq x\}} [\pi_+(\vec x),\pi_-(\vec x)]e\ d\vec x. \]
The probability $P[\Gamma = c]$ is derived by observing that
\[ P[\Gamma =c] = \int_{\{ \vec x\ | \ x_c > 0, x_{c+1}=\ldots=x_C=0\}} [\pi_+(\vec x),\pi_-(\vec x)]e\ d\vec x. \]
Note that the vector $w_c$ simply selects the columns of $(-K^{-1})$ that correspond to states with color $c$ on top of the fluid.
\end{proof}

\section{Some Special Cases}\label{sec:special}
Before proceeding to the applications we discuss two special cases in this section. 

\subsection{The  matrices $T_{++}^{(c,c')}=0$ and $T_{-+}^{(c,c')}=0$ for $c' > c+1$: no color skipping}\label{sec:special0}
Assume that the matrices $T_{++}^{(c,c')}$ and $T_{-+}^{(c,c')}$ are zero for $c' > c+1$. This means colors cannot be skipped
and having color $c$ on top of the stack implies that there is also some fluid of colors $1$ to $c-1$ present on the stack.
In this case \eqref{eq:pix} simplifies to 
\begin{align*}
    [\pi_+(x_1,\ldots,x_c,\vec 0),\pi_-(x_1,\ldots,x_c,\vec 0)] = 
    p_- T_{-+}^{(0,1)} \left( \prod_{i=1}^{c-1} e^{K_{i}x_i} (T_{++}^{(i,i+1)}+\Psi_{i} T_{-+}^{(i,i+1)}) \right) e^{K_{c} x_{c}} [I,\Psi_{c}], 
\end{align*}
where $x_1,\ldots,x_c > 0$ and $\vec 0$ is a zero vector of length $C-c$.
The distribution of $\Gamma$, computed in general via  \eqref{eq:Gamma}, can now be computed in a {\bf time that is linear in $C$} using
the following recursion:
\begin{itemize}
    \item Set $v_0=p_-, v_1 = 2v_0T^{(0,1)}(-K_1)^{-1}$ and define
    \[ v_{c+1} = v_c \left(T_{++}^{(c,c+1)}+\Psi_c T_{-+}^{(c,c+1)} \right) (-K_{c+1})^{-1}, \]
    for $c=2,\ldots,C$, then the probability $P[\Gamma  = c] = v_c e$ for $c\in \{0,\ldots,C\}$.
\end{itemize}
This follows from the fact that
\[ ¨P[\Gamma = c] = \int_0^\infty \ldots\int_0^\infty [\pi_+(x_1,\ldots,x_c,\vec 0),\pi_-(x_1,\ldots,x_c,\vec 0)] e dx_1 \ldots dx_c,\]
and $\int_0^\infty e^{K_{c_i} x_{c_i}} dx_{c_i} = (-K_{c_i})^{-1}$.
\subsection{The matrices $T_{-+}^{(c)}=0$ for $c < C$}\label{sec:special1}
The second special case does not allow a background transition of the first kind to a state $s'=(x_1,\ldots,x_C,j)$ with $j \not\in S_-$
when the fluid is decreasing, meaning $T_{-+}^{(c)}=0$  for $c \in \{1,\ldots,C-1\}$. In other words, 
any background transition that initiates an increase in the fluid, must be accompanied by a color change. 
For example the background transition occurring at time $3.6$ in Figure \ref{fig:MMFQevo2} is not allowed.
Two simplifications then occur in the computation of the stationary distribution:
\begin{itemize}
    \item The matrices
$K_c$ in \eqref{eq:Kk} are known explicitly for $c < C$ as 
\[K_c = T_{++}^{(c)}.\] 
\item The NAREs in \eqref{eq:NAREs} simplify to
\begin{align}\label{eq:Syvest}
    T_{++}^{(c)}\Psi_c + & \left( T_{+-}^{(c)} + \sum_{\ell > c} T_{++}^{(c,\ell)}\Psi_\ell \right)
   +
   \Psi_c \left(T_{--}^{(c)}+\sum_{\ell > c} T_{-+}^{(c,\ell)}\Psi_\ell \right) = 0,
\end{align}
for $c=1,\ldots,C-1$. These $C-1$ equations are Sylvester matrix equations,
where the $k$-th equation can be solved in $O(|S_+^{(c)}|^3+|S_-|^3)$ time using the
Bartels-Stewart algorithm that makes use of two Schur decompositions \cite{bartels1972solution}.
\end{itemize}

\section{The Colored MMFQ with fluid jumps}\label{sec:jumps}
The background transitions in a (colored) MMFQ never cause an immediate change in the fluid level. This may seem to restrict its applicability, for instance when modeling the workload process of a queue as arrivals cause a jump in the amount of work in the queue. However, these fluid jumps can sometimes be replaced by intervals where the fluid increases linearly to obtain a MMFQ without jumps.
The stationary distribution of the MMFQ with fluid jumps is then obtained from the stationary distribution of the MMFQ without
jumps by censoring out these intervals, see \cite{dzial1,vanHoudt29}. In this section we introduce a general framework for colored MMFQs with fluid jumps
and explain how to reduce them to colored MMFQs introduced in Section \ref{sec:coloredMMFQ}. 
We focus on {\bf upward} fluid jumps, but the same ideas can be used for colored MMFQs with downward fluid jumps.

We assume the upward fluid jump sizes have phase-type (PH) distributions. PH distributions form a general class of distributions that arise as the absorption time of a finite-state continuous-time Markov chain. They are dense in the set of all positive-valued distributions, meaning that any distribution on  $[0,\infty)$  can be approximated arbitrarily closely by a PH distribution \cite{neuts2}. 
The closure properties of PH distributions under convolution, mixtures, and finite minima/maxima further enhance their flexibility \cite{latouche1}. Well known fitting tools and methods for PH distributions include \cite{Horvath2011,Horvath2014,feldman98}.

The state space of a colored MMFQ with fluid jumps is given by
\[ \bar \Omega = \{(0,i)|i\in S_-\} \cup \{\bigcup_{c=1}^C \bar \Omega_c\}\]
with \[ \bar \Omega_c = 
\left\{(x_1,x_2,\ldots,x_C,i) | x_1,\ldots,x_c \geq 0, x_c > 0,
x_{c+1}=\ldots=x_C = 0, i \in S_- \right\}.\]
The background process is always in a state part of the set $S_-$, meaning the fluid decreases at rate $1$ at all times,
unless a background transition occurs.
A colored MMFQ with fluid jumps is characterized by the following matrices
\begin{itemize}
    \item $(T^{(c)}_{--})_{ij}$: rate of moving from state $s \in \bar \Omega_c$ to $s' \in \bar \Omega_c$. These $C$
    matrices have the same meaning as in Section \ref{sec:coloredMMFQ}. The same holds for the matrix $T_{--}^{(0)}$.
    \item $(Q_\ell^{(c,c')})_{ij}$ with $c' > c$: rate of having a {\bf fluid jump} from state $s =(x_1,\ldots,x_c,0,\ldots,0,i) \in \bar \Omega_c$ to 
    \[s'=(x_1,\ldots,x_c,0,\ldots,0,x_{c'},0,\ldots,0,j) \in \bar \Omega_{c'},\] 
    where the value of $x_{c'}$ 
    is drawn from a phase-type (PH) distribution with parameters $(\alpha_{(\ell,c')},U_{(\ell,c')})$, for $\ell=1,\ldots,L_{c'}$. 
    The index $\ell$ represents the {\it type} of the fluid jump and the number of types may depend on $c'$.
    The matrices $Q_\ell^{(0,c')}$ are defined similarly
    for fluid jumps from the boundary states $\{(0,i)|i\in S_-\}$.
    \item $(Q_\ell^{(c)})_{ij}$: rate of having a {\bf fluid jump} from state $s =(x_1,\ldots,x_c,0,\ldots,0,i) \in \bar \Omega_c$ to 
    \[s'=(x_1,\ldots,x_c+y,0,\ldots,0,j) \in \bar \Omega_{c},\] 
    where the value of $y$ 
    is drawn from a PH distribution with parameters $(\alpha_{(\ell,c)},U_{(\ell,c)})$, for $\ell=1,\ldots,L_{c}$. 
\end{itemize}
A colored MMFQ with fluid jumps is fully characterized by the $C+1$ matrices $T_{--}^{(c)}$, the
matrices $Q_\ell^{(c)}$ with $c\in \{1,\ldots,C\}$ and $\ell\in \{1,\ldots,L_{c}\}$, the matrices $Q_\ell^{(c,c')}$ with $c\in \{0,\ldots,C\}$, $c' > c$ and $\ell\in \{1,\ldots,L_{c'}\}$ 
and the PH distributions $(\alpha_{(\ell,c)},U_{(\ell,c)})$
for $c=1,\ldots,C$ and $\ell=1,\ldots,L_c$. Notice that when a fluid jump occurs, the fluid that is added has 
one color. This assumption can be relaxed by allowing that fluid of different colors
is added in a single fluid jump. We restrict ourselves to fluid jumps of one color as this suffices for the applications considered in Section \ref{sec:apps}. In fact in our
applications we always have that $c'=c$ or $c+1$ when a fluid jump occurs, so colors are never skipped.

\begin{figure}
\centering
\begin{minipage}{0.35\textwidth}
\begin{tikzpicture}[x=1.2cm,y=1cm,>=Latex,
  axis/.style={thin},
  inc/.style={very thick},             
  dec/.style={very thick,dashed},      
  lab/.style={font=\small}]
\def\xstart{2} 
\def\dA{0.7}  
\def\jA{1.5}
\def\dB{0.5}  
\def\dC{0.6}  
\def\jB{0.5}
\def\dD{1.2}  

\coordinate (P0) at (0,\xstart);

\draw[dec] (P0) -- node[above,sloped,lab]{-1} ++(\dA,-\dA) coordinate (P2);
\draw[inc] (P2) -- node[above,sloped,lab]{} ++(0,\jA) coordinate (P3);
\draw[dec] (P3) -- node[above,sloped,lab]{-1} ++(\dB,-\dB) coordinate (P4);
\draw[dec] (P4) -- node[above,sloped,lab]{-1} ++(\dC,-\dC) coordinate (P6);
\draw[inc] (P6) -- node[above,sloped,lab]{} ++(0,\jB) coordinate (P7);
\draw[dec] (P7) -- node[above,sloped,lab]{-1} ++(\dD,-\dD) coordinate (P8);

\path[fill=black!10]
  (P0 |- 0,0) -- (P0) --  (P2) -- ($(P2 -| \dA+\jA+\jB,0)$) -- (P8) -- (P8 |- 0,0) -- cycle;

\path[fill=red!10]
 (P2) -- (P3) -- (P6) -- ($(P6 -| \dA+\dB+\dC+\jB,0)$) -- ($(P2 -| \dA+\jA+\jB,0)$) -- cycle;

\path[fill=green!10]
 (P6) -- (P7) -- ($(P6 -| \dA+\dB+\dC+\jB,0)$)  -- cycle;

\draw[dotted] (P6) -- ($(P6 -| \dA+\dB+\dC+\jB,0)$);
\draw[dotted] ($(P6 -| \dA+\dB+\dC+\jB,0)$) -- ($(0,0 -| \dA+\dB+\dC+\jB,0)$);
\draw[dotted] (P2) -- ($(P2 -| \dA+\jA+\jB,0)$);
\draw[dotted] ($(P2 -| \dA+\jA+\jB,0)$) -- ($(0,0 -| \dA+\jA+\jB,0)$);

\draw[axis,->] (0,0) -- ($(P8 |- 0,0) + (0.6,0)$) node[below right] {$t$};
\draw[axis,->] (0,0) -- (0,5) node[left] {};


\foreach \Q in {P2,P4,P6}
  \draw[axis,dashed] (\Q |- 0,0) -- (\Q |- 0,4.9);

\node[lab] at ($(P0 |- 0,4)!0.5!(P2 |- 0,4)$) {1};
\node[lab,blue] at ($(P2 |- 0,4)!0.5!(P4 |- 0,4)$) {3};
\node[lab] at ($(P4 |- 0,4)!0.5!(P6 |- 0,4)$) {2};
\node[lab,blue] at ($(P6 |- 0,4)!0.5!(P8 |- 0,4)$) {2};

\tikzset{wobbly/.style={decorate, decoration={snake, amplitude=1mm, segment length=5mm}, thick}}

\draw[wobbly] ($(P8 |- 0,2.8) - (0.5,0)$) -- ($(P8 |- 0,2.8) + (0.7,0)$) node[midway,above]{};
\draw[wobbly] ($(P8 |- 0,3) - (0.5,0)$) -- ($(P8 |- 0,3) + (0.7,0)$) node[midway,above]{};
\path[fill=black]
 ($(P8 |- 0,2.8) + (0.7,-0.2)$) -- ($(P8 |- 0,3) + (0.7,0.2)$) -- ($(P8 |- 0,2.9) + (1,0)$)  -- cycle;


\node[lab, anchor=north] at ($(P0 |- 0,0)$) {0};
\node[lab, anchor=north] at ($(P2 |- 0,0)$) {\dA};
\node[lab, anchor=north] at ($(P4 |- 0,0)$) {$\pgfmathparse{\dA+\dB}\pgfmathprintnumber{\pgfmathresult}$};
\node[lab, anchor=north] at ($(P6 |- 0,0)$) {$\pgfmathparse{\dA+\dB+\dC}\pgfmathprintnumber{\pgfmathresult}$};
\node[lab, red, anchor=north] at ($(\dA+\jA+\jB,0 |- 0,0)$) {$\pgfmathparse{\dA+\jA+\jB}\pgfmathprintnumber{\pgfmathresult}$};
\node[lab, red, anchor=north] at ($(\dA+\dB+\dC+\jB,0 |- 0,0)$) {$\pgfmathparse{\dA+\dB+\dC+\jB}\pgfmathprintnumber{\pgfmathresult}$};
\end{tikzpicture}
\end{minipage}
\hfill
\begin{minipage}{0.6\textwidth}
\begin{tikzpicture}[x=1.2cm,y=1cm,>=Latex,
  axis/.style={thin},
  inc/.style={very thick},             
  dec/.style={very thick,dashed},      
  lab/.style={font=\small}]
\def\xstart{2} 
\def\dA{0.7}  
\def\jAA{0.9}
\def\jAB{0.6}
\def\jA{1.5}
\def\dB{0.5}  
\def\dC{0.6}  
\def\jB{0.5}
\def\dD{1.2}  

\coordinate (P0) at (0,\xstart);

\draw[dec] (P0) -- node[above,sloped,lab]{-1} ++(\dA,-\dA) coordinate (P2);
\draw[inc] (P2) -- node[above,sloped,lab]{+1} ++(\jAA,\jAA) coordinate (P3);
\draw[inc] (P3) -- node[above,sloped,lab]{+1} ++(\jAB,\jAB) coordinate (P3a);
\draw[dec] (P3a) -- node[above,sloped,lab]{-1} ++(\dB,-\dB) coordinate (P4);
\draw[dec] (P4) -- node[above,sloped,lab]{-1} ++(\dC,-\dC) coordinate (P6);
\draw[inc] (P6) -- node[above,sloped,lab]{+1} ++(\jB,\jB) coordinate (P7);
\draw[dec] (P7) -- node[above,sloped,lab]{-1} ++(\dD,-\dD) coordinate (P8);

\path[fill=black!10]
  (P0 |- 0,0) -- (P0) --  (P2) -- ($(P2 -| \dA+\jA+\jA+\jB+\jB,0)$) -- (P8) -- (P8 |- 0,0) -- cycle;

\path[fill=red!10]
 (P2) -- (P3a) -- (P6) -- ($(P6 -| \dA+\jA+\dB+\dC+\jB+\jB,0)$) -- ($(P2 -| \dA+\jA+\jA+\jB+\jB,0)$) -- cycle;

\path[fill=green!10]
 (P6) -- (P7) -- ($(P6 -| \dA+\jA+\dB+\dC+\jB+\jB,0)$)  -- cycle;

\draw[dotted] (P6) -- ($(P6 -| \dA+\jA+\dB+\dC+\jB+\jB,0)$);
\draw[dotted] ($(P6 -| \dA+\jA+\dB+\dC+\jB+\jB,0)$) -- ($(0,0 -| \dA+\jA+\dB+\dC+\jB+\jB,0)$);
\draw[dotted] (P2) -- ($(P2 -| \dA+\jA+\jA+\jB+\jB,0)$);
\draw[dotted] ($(P2 -| \dA+\jA+\jA+\jB+\jB,0)$) -- ($(0,0 -| \dA+\jA+\jA+\jB+\jB,0)$);

\draw[axis,->] (0,0) -- ($(P8 |- 0,0) + (0.6,0)$) node[below right] {$t$};
\draw[axis,->] (0,0) -- (0,5) node[left] {};


\foreach \Q in {P2,P3,P3a,P4,P6,P7}
  \draw[axis,dashed] (\Q |- 0,0) -- (\Q |- 0,4.9);

\node[lab] at ($(P0 |- 0,4)!0.5!(P2 |- 0,4)$) {1};
\node[lab] at ($(P2 |- 0,4)!0.5!(P3 |- 0,4)$) {\tiny (3,(1,1))};
\node[lab] at ($(P3 |- 0,4)!0.5!(P3a |- 0,4)$) {\tiny (3,(1,2))};
\node[lab,blue] at ($(P3a |- 0,4)!0.5!(P4 |- 0,4)$) {3};
\node[lab] at ($(P4 |- 0,4)!0.5!(P6 |- 0,4)$) {2};
\node[lab] at ($(P6 |- 0,4)!0.5!(P7 |- 0,4)$) {\tiny (2,(2,1))};
\node[lab,blue] at ($(P7 |- 0,4)!0.5!(P8 |- 0,4)$) {2};

\draw[lab,->,blue] ($(P3a |- 0,4)!0.5!(P4 |- 0,4)+(0,0.3)$) -- ++(0,0.25) -- ($(P2 |- 0,4)!0.5!(P3 |- 0,4)+(-0.15,0.55)$) -- ++(0,-0.3);
\draw[lab,->,blue] ($(P3a |- 0,4)!0.5!(P4 |- 0,4)+(0,0.3)$) -- ++(0,0.25) -- ($(P3a |- 0,4)!0.5!(P3 |- 0,4)+(-0.15,0.55)$) -- ++(0,-0.3);
\draw[lab,->,blue] ($(P7 |- 0,4)!0.5!(P8 |- 0,4)+(0,0.3)$) -- ++(0,0.25) -- ($(P6 |- 0,4)!0.5!(P7 |- 0,4)+(-0.15,0.55)$) -- ++(0,-0.3);


\node[lab, anchor=north] at ($(P0 |- 0,0)$) {0};
\node[lab, anchor=north] at ($(P2 |- 0,0)$) {\dA};
\node[lab, anchor=north] at ($(P3 |- 0,0)$) {$\pgfmathparse{\dA+\jAA}\pgfmathprintnumber{\pgfmathresult}$};
\node[lab, anchor=north] at ($(P3a |- 0,0)$) {$\pgfmathparse{\dA+\jA}\pgfmathprintnumber{\pgfmathresult}$};
\node[lab, anchor=north] at ($(P4 |- 0,0)$) {$\pgfmathparse{\dA+\jA+\dB}\pgfmathprintnumber{\pgfmathresult}$};
\node[lab, anchor=north] at ($(P6 |- 0,0)$) {$\pgfmathparse{\dA+\dB+\dC+\jA}\pgfmathprintnumber{\pgfmathresult}$};
\node[lab, anchor=north] at ($(P7 |- 0,0)$) {$\pgfmathparse{\dA+\dB+\dC+\jA+\jB}\pgfmathprintnumber{\pgfmathresult}$};

\node[lab, red, anchor=north] at ($(\dA+\jA+\jB+\jA+\jB,0 |- 0,0)$) {$\pgfmathparse{\dA+\jA+\jB+\jA+\jB}\pgfmathprintnumber{\pgfmathresult}$};
\node[lab, red, anchor=north] at ($(\dA+\dB+\dC+\jB+\jA+\jB,0 |- 0,0)$) {$\pgfmathparse{\dA+\dB+\dC+\jB+\jA+\jB}\pgfmathprintnumber{\pgfmathresult}$};
\end{tikzpicture}
\end{minipage}

    \caption{Illustration of the reduction of a sample path of a $3$-colored MMFQ {\it with fluid jumps} to a sample path
    of a $3$-colored MMFQ.  $S_-=\{1,2,3\}$ and $S^{(c)}_+=\{1,2,3\} \times \{(1,1),(1,2),(2,1)\}$ if type-2 fluid jumps
    have an order $2$ phase-type representation and type-2 jumps are exponential.
    The fluid jump occurring at time $0.7$ is a type-$1$ jump of size $1.5$, the fluid jump at time $1.8$ is type-$2$ and has size
    $0.5$.}
    \label{fig:MMFQreduce}
\end{figure}
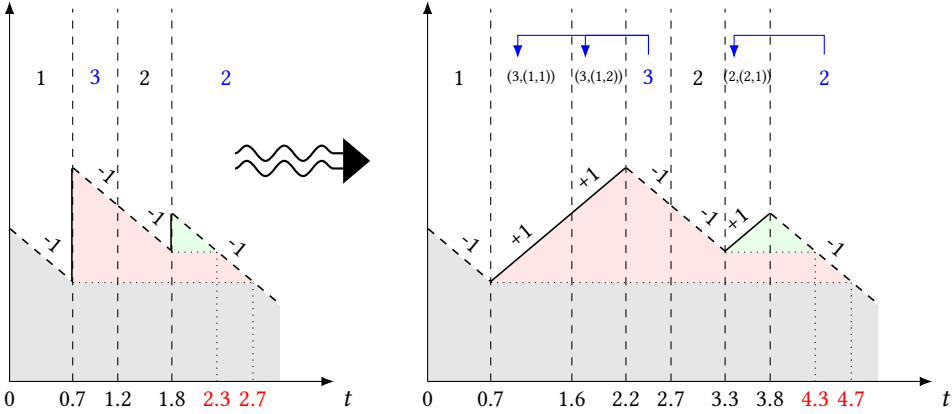

To analyze a colored MMFQ with fluid jumps, we replace the fluid jumps of height $h$ by an interval of length
$h$ where the fluid is added at rate $1$ (see Figure \ref{fig:MMFQreduce}). After computing the stationary distribution of the resulting colored MMFQ using
Theorem \ref{th:main}, we censor out these intervals to obtain the stationary distribution of the colored MMFQ with fluid jumps.
This is essentially possible because of the phase-type nature of the fluid jumps and would not work with 
general fluid jump size distributions. 

The set $S_-$ is the same for the colored MMFQ with and without fluid jumps. We set $S_+^{(c)} = S_- \times 
\{(\ell,m)| \ell=1,\ldots,L_c, m=1,\ldots, M_{(\ell,c)}\}$, where
$M_{(\ell,c)}$ is the order of the PH distribution $(\alpha_{(\ell,c)},U_{(\ell,c)})$. 
A state $j=(a,(\ell,m))\in S_+^{(c)}$ has two
components: 
\begin{itemize}
    \item A component $a$ that represents the state of the background process of the colored MMFQ with fluid jumps
    immediately after the fluid jump. In the colored MMFQ without fluid jumps, $a$ remains frozen
    as long as fluid is added.
    \item A component $(\ell,m)$, where $m$ represents the phase of the fluid that is being added. The phase is initialized by $\alpha_{(\ell,c)}$ 
    when fluid of color $c$ is added by a type-$\ell$ fluid jump and then evolves according to the matrix $U_{(\ell,c)}$ until the required amount
    of fluid is added. 
\end{itemize}
Let us now make this more precise by introducing the matrices that characterize the colored MMFQ. 
\begin{itemize}
    \item At rate $(Q_\ell^{(c,c')})_{ij}$ 
    an upward movement in the fluid level from state $(x_1,\ldots,x_c,0\ldots,0,i)$ is initiated 
    associated to a type-$\ell$ fluid jump and 
    the colored MMFQ enters state 
    \[(x_1,\ldots,x_n,0,\ldots,0,0^+,0\ldots,0,(j,(\ell,m))),\] 
    with probability $(\alpha_{(\ell,c')})_m$, where the $0^+$ occurs in position $c'$
    and $j \in S_+^{(c')}$. 
    We therefore have in matrix form
    \[ T_{-+}^{(c,c')}= [Q_1^{(c,c')} \otimes \alpha_{(1,c')}, \ Q_2^{(c,c')} \otimes \alpha_{(2,c')}, \ \ldots,\ Q_{L_{c'}}^{(c,c')} \otimes \alpha_{(L_{c'},c')}] \]
    for $0 \leq c < C$.
    Note that $(\ell,m)$ is the initial state
    of the PH distribution characterized by $(\alpha_{(\ell,c)},U_{(\ell,c)})$ 
    that is used to add the work of the type-$\ell$ fluid jump to the fluid queue.
    The background state $j$ is stored in the first entry of $(j,(\ell,m))$ (see blue arrows in Figure \ref{fig:MMFQreduce}). This state $j$ remains frozen as long as the fluid increases. 
    \item Completely analogous, we have 
    \[ T_{-+}^{(c)}= [Q_1^{(c)} \otimes \alpha_{(1,c)}, \ Q_2^{(c)} \otimes \alpha_{(2,c)}, \ \ldots,\ Q_{L_{c}}^{(c)} \otimes \alpha_{(L_{c},c)}] \]
    for $1 \leq c \leq C$. 
    \item The color $c$ fluid associated to a type-$\ell$ fluid jump has a PH distribution characterized by $(\alpha_{(\ell,c)},U_{(\ell,c)})$, so work is added according to
    the matrix $U_{(\ell,c)}$, meaning
    \[ T_{++}^{(c)} = I \otimes \begin{bmatrix}
        U_{(1,c)} & & \\
        & \ddots & \\
        & & U_{(L_c,c)}
    \end{bmatrix},\]
    for $1 \leq c \leq C$, where $I$ is the identity matrix of size $|S_-|$. 
    The vector $(-U_{(\ell,c)})e$ holds the rates at which the increasing interval
    of the fluid queue ends:
    \[ T_{+-}^{(c)} = I \otimes \begin{bmatrix}
        (-U_{(1,c)})e \\ \vdots \\ (-U_{(L_c,c)})e
    \end{bmatrix},\]
    for $1 \leq c \leq C$.
    \item The matrices $T_{++}^{(c,c')}=0$ for $c' > c$, as any upward jump is followed by an interval where
    the fluid decreases. This is due to our assumption that upward fluid jumps add one color.
\end{itemize}
The stationary distribution of the colored MMFQ {\it with fluid jumps} 
(we add a superscript $^{jumps}$ to distinguish with the colored MMFQ without fluid jumps) 
can now be readily obtained from Theorem \ref{th:main} as
\begin{itemize}
    \item  $\pi_-^{jumps}(\vec x)$ is given by the right-hand side of \eqref{eq:pix} if we replace
$[I,\Psi_{k_n}]$ by $\Psi_{k_n}$, because the stationary distribution of a Markov process censored on a subset of the state space
is proportional to the stationary distribution of the uncensored process.
\item Equation \eqref{eq:pmin} remains valid for the same reason, that is, $p_-^{jumps}$ also
solves \eqref{eq:pmin} and
\item the normalization of $p_-^{jumps}$ is the same as in \eqref{eq:norm} if we remove the $2$, because the amount of time
that the fluid goes up equals the amount of time that it goes down.
\end{itemize}
The expression for $\Xi^{jumps}$ and $\Gamma^{jumps}$ for the colored MMFQ with fluid jumps is the same as in 
\eqref{eq:Xi} and \eqref{eq:Gamma}, respectively, if we remove the $2$ appearing on the right hand side.

\section{Applications}\label{sec:apps}
In this section we present a number of applications. These concern queueing systems that
cannot be analyzed in an efficient manner using existing Markov chain modeling approaches. 
For each of these applications we introduce a colored MMFQ with fluid jumps, which can be analyzed by
constructing a colored MMFQ without fluid jumps as explained in Section \ref{sec:jumps}.

\subsection{The MMAP[L]/PH[L]/1/N/LCFS queue}\label{sec:example1}
We start with a  queueing model that is further generalized in the next subsection and has the following properties:
\begin{itemize}
    \item The queue has a single server and serves jobs in Last-Come-First-Served (LCFS) order with job resumption. This means that the work performed
    on an interrupted job is not lost.  
    \item The queue can hold up to $N$ jobs: one in the server and $N-1$ in the waiting room.
    \item Jobs arrive according to an MMAP[L] arrival process \cite{asmussen1993marked,HE4}. Such a process is a multi-type version of the common
    MAP arrival process. It is characterized by a set of $L+1$ square matrices of size $M_a$: $D_0,D_1,\ldots,D_L$, where 
    $(D_\ell)_{ij}$, for $\ell =1,\ldots,L$, is the rate at which type-$\ell$ arrivals occur in state $i$ of the arrival process, while the state changes to state $j$. $(D_0)_{ij}$, with $i \not= j$, is the rate at which a state transition occurs from state $i$ to state $j$     without an arrival. Finally $-(D_0)_{ii} = \sum_{j \not= i}(D_0)_{ij}+\sum_{\ell=1}^L \sum_{j} (D_\ell)_{ij}$.
    Note that consecutive inter-arrival times 
    and consecutive job types can be correlated, and there can also be correlation between inter-arrival times and job types.  
    Fitting algorithms for MMAP[L] arrival processes (and subclasses thereof) have been proposed in \cite{Buchholz2010,Horvath2013,Casale2016}.
    \item The time needed to process a job has a PH distribution that depends on its type, this is denoted as PH[L]
    service. The service time $X_\ell$ of a type-$\ell$ job is characterized by $(\beta_\ell,V_\ell)$ such that $P[X_\ell > t] = \beta_\ell e^{V_\ell t} e$, where $V_\ell$ is a size $M_{s_\ell}$ matrix. For instance, if type-$\ell$ jobs have exponential durations with parameter
    $\mu_\ell$, then $\beta_\ell = 1$ and $V_\ell = -\mu_\ell$.
\end{itemize}
MMAP[L]/PH[L]/1/FCFS queues can be analyzed using an age or workload process \cite{vanHoudt29,HE9}, while
the analysis of MMAP[L]/PH[L]/1/LCFS queues relies on tree-structured Quasi-Birth-Death Markov chains \cite{HE6}. However these approaches no longer work when the queue
has a finite capacity. In this section we show how we can deal with MMAP[L]/PH[L]/1/N/LCFS queues by constructing
a colored MMFQ such that {\bf the computation time of the queue length distribution and loss probability 
is linear in $N$} and cubic in $M_a \sum_\ell M_{s_\ell}$. Note that if we simply define a finite state Markov chain by keeping
track of the content of the MMAP[L]/PH[L]/1/N/LCFS queue, the number of states would be exponential in $N$ (as we need to keep
track of the types of the jobs in the queue).

We now introduce the colored MMFQ with fluid jumps. The idea is the following: 
the fluid of the queue reflects the workload in the queue; hence, the workload jumps up whenever
an arrival occurs. We use color $c$ to represent the work associated with
the $c$-th oldest job in the queue, meaning we use $C=N$ colors. To deal with the finiteness of the queue, new job arrival
events are blocked when the color on top of the fluid is color $N$.

When there are $n \in\{1,\ldots,N\}$ jobs in the queue, the colored MMFQ with fluid jumps is in some state $(x_1,\ldots,x_N,j)$ where $x_1,\ldots,x_n >0$ and $x_{n+1},\ldots,x_N = 0$. The value of $x_i$ reflects
the remaining amount of work for the $i$-th oldest job in the queue (for $i \leq n)$.  The background process
of the colored MMFQ with fluid jumps has $|S_-|=M_a$ states that correspond to the states of the arrival process. 
The colored MMFQ with jumps evolves as follows
when there are $n < N$ jobs in the queue:
\begin{itemize}
    \item Unless a background transition occurs the workload decreases at rate one possibly changing the color at the top of the fluid (this happens when a job
    completes service).
    \item At rate $(D_0)_{ij}$, with $i\not=j$, the background state changes from $i$ to $j$ while the fluid continues to decrease.
    \item At rate $(D_\ell)_{ij}$ a type-$\ell$ arrival occurs, this causes a fluid jump that adds an amount
    of fluid with color $n+1$ that corresponds to the size of the incoming type-$\ell$ job. 
\end{itemize}
Let us now specify the matrices $T_{--}^{(c)}$,  $Q_\ell^{(c)}$ and $Q_\ell^{(c,c')}$, as well as the PH distributions $(\alpha_{(\ell,c)},U_{(\ell,c)})$:
\begin{itemize}
    \item For  $0 \leq c < N$ we have $T_{--}^{(c)} = D_0$, that is, the fluid decreases at rate $1$ and the state of the arrival process
    may change without causing a job arrival (e.g., a time $1.2$ in Figure \ref{fig:MMFQreduce}). 
    When the queue is full, new jobs are discarded/blocked, but the
    phase of the arrival process may still change, therefore, $T_{--}^{(N)} = D_0 + \sum_{\ell=1}^L D_\ell$.
    \item At rate $(D_\ell)_{ij}$ a type-$\ell$ arrival occurs in state $(0,i)$ or $(x_1,\ldots,x_c,0\ldots,0,i)$, with $0 < c < N$. We therefore have in matrix form
    \[ Q_\ell^{(c,c+1)}= D_\ell, \]
    for $0 \leq c < N$ and $\ell=1,\ldots,L_c=L$.
    For the PH distributions we have $(\alpha_{(\ell,c)},U_{(\ell,c)})=(\beta_\ell,V_\ell)$, meaning the distribution only depends on $\ell$, and
    not on $c$ (as the amount of work of a job depends on its type $\ell$, but not on the number of jobs waiting in the queue).
\end{itemize}
The $Q_\ell^{(c)}$ matrices and the remaining $Q_\ell^{(c,c')}$ matrices  are zero for all $\ell$.
It is worth noting that for the associated colored MMFQ without fluid jumps,  both
the special case of subsection \ref{sec:special0} and subsection \ref{sec:special1} apply. 
Moreover, the matrix $T_{-+}^{(N)}=0$ which means that
$\Psi_N$ can also be determined by solving a Sylvester matrix equation. As a result all the $\Psi_c$ and $K_c$ matrices 
as well as the queue length distribution can be computed in $O(M_a^3 (\sum_\ell M_{s_\ell})^3N)$ time.
For this application, the distribution of $\Gamma^{jumps}$ of the colored MMFQ with fluid jumps corresponds 
to the queue length distribution.

\subsection{The MMAP[L]/PH[L]/1/N[L]/LCFS queue}
In this subsection we consider a generalization of the model in the previous subsection. 
In the  MMAP[L]/PH[L]/1/N/LCFS queue all job types
are accepted as long as the queue contains less than $N$ jobs and all jobs are rejected when there are $N$ jobs in the queue.
We now relax this by associating a threshold $N_\ell$ to each job type $\ell$, for $\ell=1,\ldots,L$. The idea is that type-$\ell$ jobs
are accepted as long as the number of jobs in the queue {\bf  of any type} is below $N_\ell$, while type-$\ell$ jobs are  rejected as soon as there
are $N_\ell$ jobs {\bf of any type} in the queue. We allow that some of the $N_\ell$ values equal $+\infty$, but at least one value is finite.
Otherwise the queue is an MMAP[L]/PH[L]/1/LCFS queue which can be analyzed with existing methods \cite{HE3}.

We restrict ourselves to discussing the main changes required to the colored MMFQ with fluid jumps
presented in the previous subsection. 
Without loss of generality assume the job types are ordered such that $1 \leq N_L \leq \ldots \leq N_2 \leq N_1 \leq +\infty$.
The following changes are required compared to subsection \ref{sec:example1}, which corresponded to setting $N_\ell = N < +\infty$ 
for $\ell=1,\ldots,L$:
\begin{itemize}
    \item The number of colors used equals $C=N_1$ if $N_1$ is finite and $C=1+\max  \{ N_\ell | N_\ell < +\infty\}$ otherwise.
    For instance, if $(N_1,N_2,N_3)=(+\infty,20,10)$, then
    $C=21$.
    \item If there are $c \leq C$ jobs in the queue, jobs of types $\{\ell | N_\ell \leq c\}$ are rejected; therefore
    \[ T_{--}^{(c)} = D_0 + \sum_{\ell:  N_\ell \leq c} D_\ell,\]
    for $0 \leq c\leq C$. Note that   
    $T_{--}^{(C)} = D_0+\sum_{\ell:N_\ell < +\infty} D_\ell$.  
    If $(N_1,N_2,N_3)=(+\infty,20,10)$, then $T_{--}^{(20)}= T_{--}^{(21)}=D_0+D_1+D_2$ and $T_{--}^{(19)}=D_0+D_1$.
    \item As only type-$\ell$ jobs with $N_\ell > c$ can be added
    to the queue as the $(c+1)$-th job, only type-$1$ to type-$L_{c+1}$ jobs can arrive with $L_{c+1} = \max \{\ell | N_\ell > c\}$
    when the queue length equals $c$.
    If $(N_1,N_2,N_3)=(+\infty,20,10)$, then
    $L_{c+1}=3$, for $c< 10$, $L_{c+1}=2$, for $10 \leq c < 20$ and $L_{c+1}=1$, for $c = 20$.
    As before we have 
    \[Q_\ell^{(c,c+1)}=D_\ell,\] 
    for $\ell = 1,\ldots,L_{c+1}$.
    As in subsection \ref{sec:example1} the PH distributions are such that $(\alpha_{(\ell,c)},U_{(\ell,c)})=(\beta_\ell,V_\ell)$.
    \item When $N_1=+\infty$, we also have matrices $Q_\ell^{(C)}=D_\ell$ for all $\ell$ with $N_\ell=+\infty$. 
\end{itemize}
This model also fits within the special cases of subsection \ref{sec:special0} and \ref{sec:special1}. 
When $N_1 < +\infty$ the matrix $T_{-+}^{(C)}$ is zero and equation \eqref{eq:NAREC} for $\Psi_C$ also reduces to 
a Sylvester matrix equation; for  $N_1 = +\infty$, it does not.

\subsection{A finite FCFS queue with a "multi-level job" cascade}\label{sec:apps3}
The third example is a queue with FCFS service, it has the following properties:
\begin{itemize}
    \item The queue has a single server and can hold up to $N$ level-$1$ jobs.  
    \item Level-$1$ jobs arrive exogenously according to a MAP process characterized by the matrices $D_0$ and $D_1$
    (equivalently, an MMAP[L] with $L=1$, see Section \ref{sec:example1} for details).
    \item There are $C$ job levels that are all served by the same single server. While a level-$c$ job is in service (for $c=1,\dots,C-1$), level-$(c+1)$ jobs are generated by a Poisson process of rate $\gamma_c$. 
        A level-$(c+1)$ job $y$ is called a \emph{child} of a level-$c$ job $x$ if $y$ is generated during the service of $x$. 
        A level-1 job departs the system only after its own service \emph{and} the service of all jobs in its descendant tree have completed.
    \item Level-1 jobs are served in FCFS order, and a level-1 job may begin service only after the previous level-1 job has left the system (i.e., all of its descendants have completed). 
        The order in which the descendants of a job currently in service are processed is at the scheduler’s discretion (e.g., FCFS, LIFO, priority by level), provided the server is work-conserving.
    \item The service time of a level-$c$ job follows a PH distribution with parameters $(\beta_c,V_c)$.
\end{itemize}
Our objective is to determine the queue length distribution of the level-$1$ jobs. We can therefore select the order
in which the descendants of a level-$1$ job are served in a convenient manner, that is, we assume these jobs are
served using preemptive priority where a level-$(c+1)$ job has higher priority than a level-$c$ job. In other words
the tree of descendants is traversed in depth-first order with preemption.

We can model this queue as a classic MAP/PH/1/N queue. However, the number of
PH phases required would be exponential in $C$ as the level $1$ to $c$ service phases at preemption time need to be
stored when a level-$(c+1)$ job is in service. If we want to avoid preemption, we are faced with the problem
that there is no upper bound on the number of level-$(c+1)$ jobs that can be generated during the service of a level-$c$
job, meaning the PH distribution would have an infinite number of states. 

We now illustrate how this queueing system can be analyzed in an efficient manner using a colored MMFQ with fluid jumps. 
\begin{itemize}
    \item The set of states $S_- = \{1,\ldots,M_a\} \times \{1,\ldots,N\}$ and keeps track of the MAP state and
    the number of level-$1$ jobs in the queue. 
    \item We use $C$ colors. If a level-$c$ job is in service, the top color is $c$ and 
    for $k \in \{1,\ldots,c\}$ the amount of fluid of
    color $k$ is the remaining service time of the level-$k$ job (that was interrupted if $k < c$).
    Note that as long as the fluid level remains larger than zero, all jobs in service are descendants of
    the same level-$1$ job.
    \item At rate $(D_1)_{ij}$ an arrival occurs, this changes the background state from $(i,n)$ to $(j,\min(N,n+1))$, while
    at rate $(D_0)_{ij}$ the background state changes from $(i,n)$ to $(j,n)$. Hence,
    \[ T^{(c)}_{--} = D_0 \otimes I + D_1 \otimes \begin{bmatrix}
        0 & 1 & &  \\
       & \ddots& \ddots & \\
        & & 0 & 1\\
        & &  & 1
    \end{bmatrix} \]
    for $c=1\ldots,C$.
    \item  When a level-$c$ job is in service, meaning the color at the top of the fluid is $c$, 
    a level-$(c+1)$ job is generated at rate $\gamma_c$. This causes an immediate upward jump in the fluid level, the size of which has a PH distribution with parameters $(\beta_{c+1},V_{c+1})$. Hence,
    \[ Q_1^{(c,c+1)} = \gamma_c I,\]
    with $L_c=1$, for all $c$, and $(\alpha_{(1,c)},U_{(1,c)}) = (\beta_c,V_c)$.
\end{itemize}
There is, however, a complication that we disregarded so far. When the fluid level hits zero and the background state is $(i,n)$ with $n > 1$, the colored MMFQ with fluid jumps should immediately add the work of the next level-$1$ job in the queue and the background state should become $(j,n-1)$. It may appear at first glance that frameworks introduced in Sections \ref{sec:coloredMMFQ} 
and \ref{sec:jumps} can therefore not be used  as they do not support such flexibility at the boundary when the fluid becomes zero.
There is however a simple solution for this: whenever the fluid hits zero with a background state of the form
$(i,n)$ with $n > 1$, we simply remain within this state for an exponential amount of time with mean one and
then jump up. Note that during this time the state $i$ of the MAP is frozen. In other words the size $M_aN$ matrix $T_{--}^{(0)}$
is given by
\[ T_{--}^{(0)} = \begin{bmatrix}
    D_0 & 0 &\ldots & 0\\
    0 & -I &\ldots & 0\\
    \vdots & \ddots &\ddots & 0\\
    0 &\ldots & 0 & -I\\
\end{bmatrix} \quad \mbox{ and } \quad  Q_1^{(0,1)} =  \begin{bmatrix}
    D_1 & 0 &\ldots & 0\\
    I & 0 &\ldots & 0\\
    \vdots & \ddots &\ddots & 0\\
    0 &\ldots & I & 0\\
\end{bmatrix}.\]
After obtaining the stationary distribution of the colored MMFQ without jumps, we censor out all the
periods where the fluid increases as well as the periods where the level is equal to zero with
a background state of the form $(i,n)$ with $n> 1$.

\section{Numerical and runtime results}\label{sec:num}

In this section we first present some fairly arbitrary numerical examples to demonstrate that the colored MMFQ with
fluid jumps can be used to efficiently compute the queue length distribution and loss probabilities in an
MMAP[L]/PH[L]/1/N[L]/LCFS queue. The computation time required for these examples is a fraction of a second,
even with $N_1=1000$, while the finite state Markov chain that keeps track of the content of the queue would
require $\Theta(2^{N_1})$ states. Next we present runtime results for the application in Section \ref{sec:apps3},
where we compare the runtime results of the colored MMFQ framework with the classic Quasi-Birth-Death Markov chain used to
solve the MAP/PH/1/N queue.

\subsection{Numerical results for the MMAP[L]/PH[L]/1/N[L]/LCFS queue}
The MMAP[L] arrival process with arrival rate $\lambda$ is characterized by
\begin{align}
D_0= \begin{bmatrix}
 -\lambda-1/q_1 &1/q_1 \\ 1/q_2 &-\lambda-1/q_2    
\end{bmatrix}, \ \ 
D_1= \begin{bmatrix}
 \lambda p_1 & 0 \\ 0 &\lambda p_2    
\end{bmatrix}, \ \ 
D_2= \begin{bmatrix}
 \lambda (1-p_1) & 0 \\ 0 &\lambda (1-p_2)    
\end{bmatrix},
\end{align}
with $q_1 = 100, q_2 = 500$, $p_1 = 0.1$ and $p_2 = 0.3$. 
Type-1 jobs are exponential with mean $2$, while type-2 jobs are exponential with mean $1/2$.
The parameter $\lambda$ is set such that a predefined load $\rho$ is achieved. Note that
the instantaneous load when the arrival process is in state $2$ is higher than in state $1$,
while the mean time spent in state $i$ is $q_i$ for $i=1,2$.

\begin{figure*}[t!]
\begin{subfigure}[t]{0.48\textwidth}
  \centering
  \includegraphics[width=\linewidth]{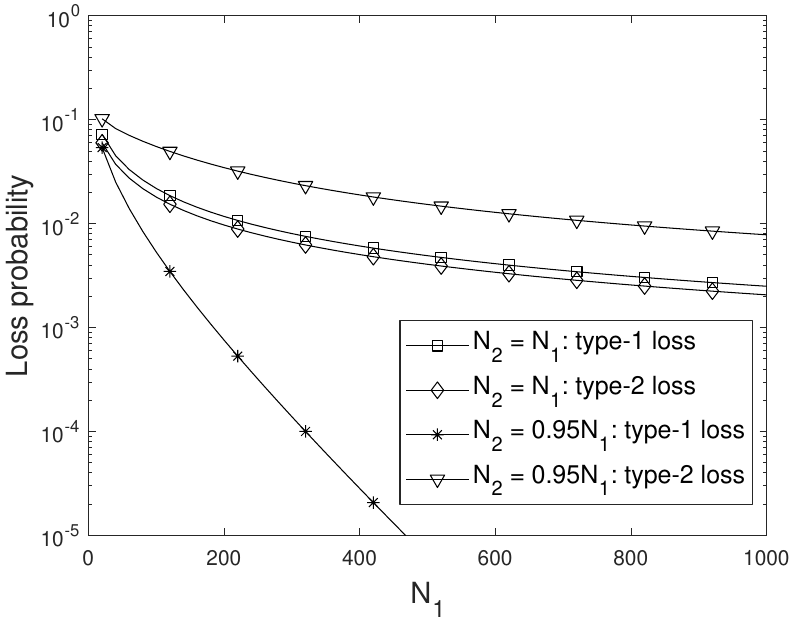}
  \caption{$\rho=1$}\label{fig:1a}
\end{subfigure}
\begin{subfigure}[t]{0.48\textwidth}
  \centering
  \includegraphics[width=\linewidth]{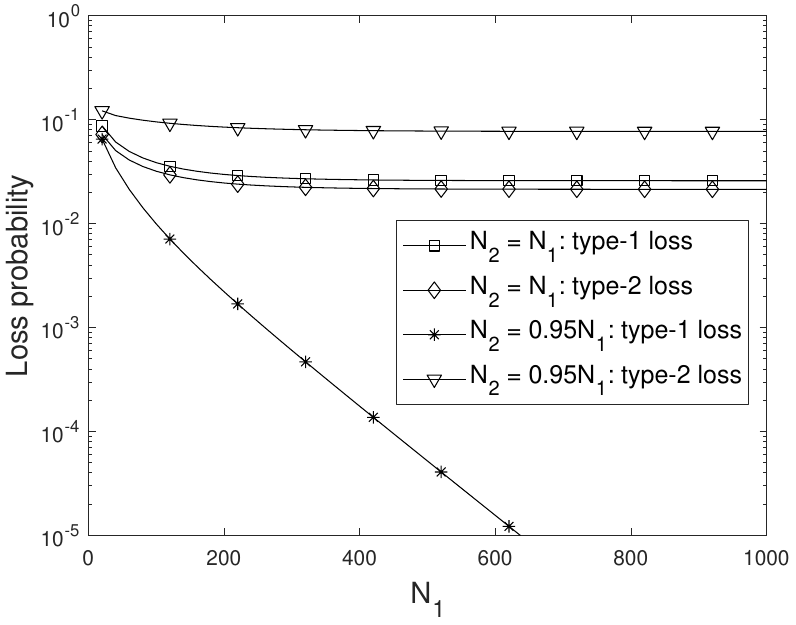}
  \caption{$\rho=1.025$}\label{fig:1b}
\end{subfigure}
\caption{Loss probability of an MMAP[L]/PH[L]/1/N[L]/LCFS queue with $L=2$ job types, exponential job durations
and $2$-state MMAP[L] arrivals. }
\label{fig:1}
\end{figure*}

In Figure \ref{fig:1} we plot the loss probability as a function of $N_1$ for a load $\rho=1$ and $\rho=1.025$
for both job types. Note that the Markov process is positive (Harris) recurrent for any load $\rho$ as the
queue does not allow any additional arrivals when it contains $N_1$ jobs.  For $N_2$ two cases are considered:
either $N_2=N_1$, in which case the loss probability of both types is similar, or
$N_2 = 0.95N_1$, which means the final part of the queue cannot be used by type-$2$ arrivals and
therefore the loss probability decays much faster for the type-$1$ jobs (at the expense of a
slower decay for the type-$2$ jobs).

Figure \ref{fig:2} presents some results for the queue length distribution using the same MMAP[L] arrival
process and the same service times. In this figure $N_1=1000$ and five settings
for $N_2$ are considered: $0, 250, 500, 750$ and $1000$. When $\rho=1$ the queue length distribution is more or less uniform
between $0$ and $N_2$, except for values close to $0$ and $N_2$. When $\rho > 1$ the probability of having $q$
jobs in the queue initially increases with $q$ and reaches a maximum in $q=N_2$. The queue length distribution starts to
decay quickly as soon as $q$ exceeds $N_2$ as the load of the type-1 jobs only is $0.7$.

\begin{figure*}[t!]
\begin{subfigure}[t]{0.48\textwidth}
  \centering
  \includegraphics[width=\linewidth]{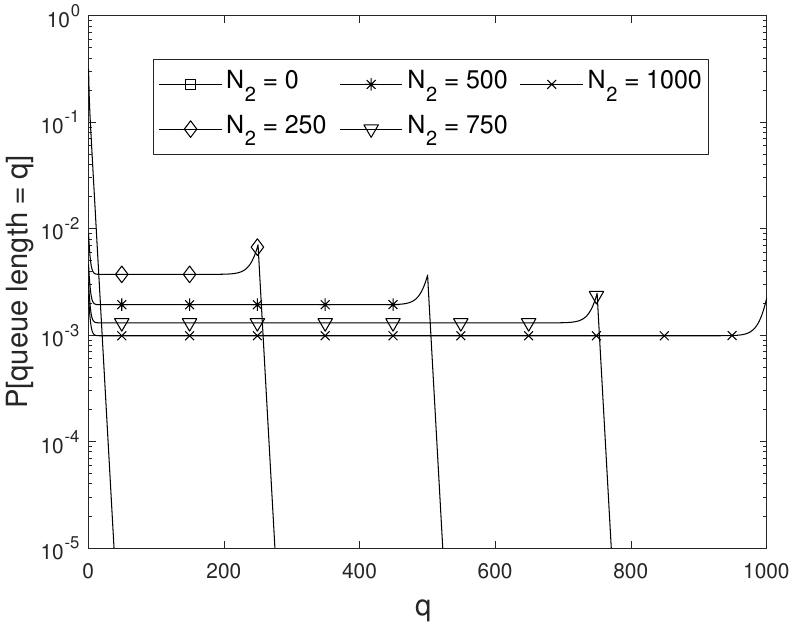}
  \caption{$\rho=1$}\label{fig:2a}
\end{subfigure}
\begin{subfigure}[t]{0.48\textwidth}
  \centering
  \includegraphics[width=\linewidth]{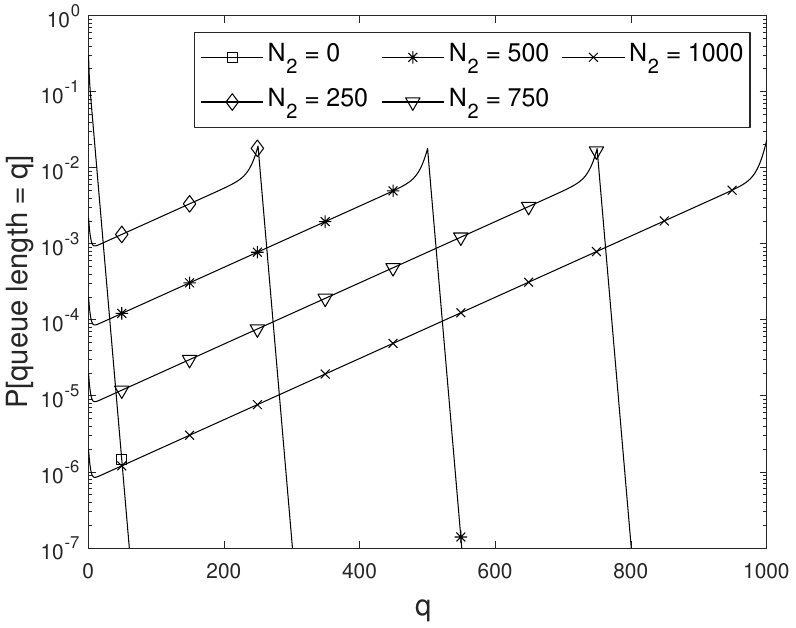}
  \caption{$\rho=1.025$}\label{fig:2b}
\end{subfigure}
\caption{Queue length distribution of an MMAP[L]/PH[L]/1/N[L]/LCFS queue with $L=2$ job types, exponential job durations
and $2$-state MMAP[L] arrivals. The queue can hold up to $1000$ jobs (that is, $N_1 = 1000$).}
\label{fig:2}
\end{figure*}

\subsection{Runtime results for the finite FCFS queue with a "multi-level job" cascade}
We consider the queue introduced in Section \ref{sec:apps3}. It is not hard to see that if the number of 
job levels equals $C$ and the phase-type representation of level $c$ jobs equals $k$ for all $c$, then
the phase-type representation of a single "multi-level" job equals $\sum_{c=1}^C k^c = (k^{C+1}-1)/(k-1)-1$,
as $k^c$ phases are needed when a level $c$ job is being executed (for all $c$).
This implies that the classic Quasi-Birth-Death Markov chain used to solve the MAP/PH/1/N queue has blocks
that grow exponential in $C$, meaning the runtime scales as $k^{3C}$. For the colored MMFQ, the size of the
matrices involved does not depend on $C$. In this case the number of matrices involved grows linear in $C$ and therefore
so does the runtime.

\begin{figure*}[t!]
  \centering
  \includegraphics[width=0.5\linewidth]{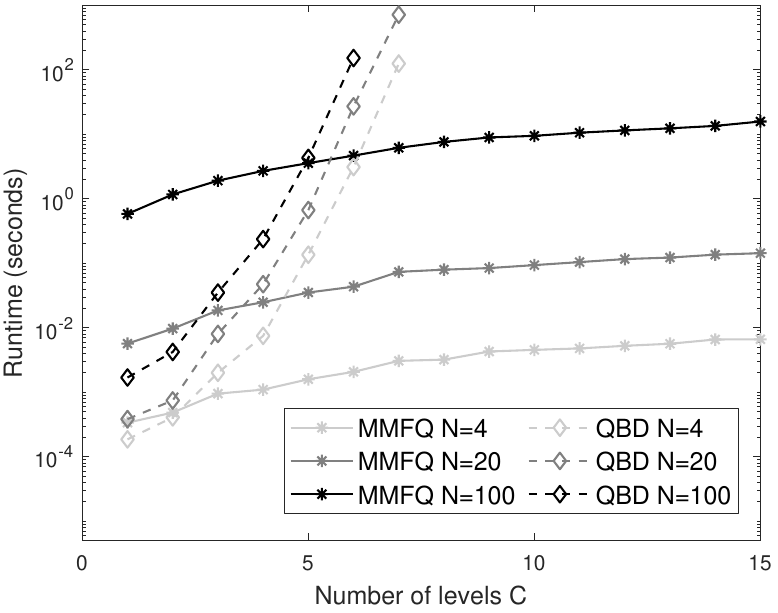}
  \caption{Runtime of the colored MMFQ framework versus the Quasi-Birth-Death Markov chain to compute the
  queue length distribution of the finite FCFS queue with a "multi-level job" cascade}\label{fig:runtime}
\end{figure*}

Figure \ref{fig:runtime} presents the runtime results for a MATLAB implementation on an Intel(R) Core(TM) i7-8550U CPU @ 1.80GHz processor with 16Gb memory. The timing was performed  using the {\it timeit} MATLAB function.
It shows that the runtime indeed increases linearly in $C$ for the colored MMFQ framework, while it increases
exponentially in $C$ for the Quasi-Birth-Death approach. When $C=7$ each of the block matrices of the Quasi-Birth-Death
Markov chain requires approximately $344$MB of memory for a $2$-state MAP. For $C=8$ the memory requirement for each
block matrix would be around 3GB of memory, making the computation infeasible on a 16GB machine. 

The precise parameter setting for the runtime results presented in Figure \ref{fig:runtime} are as folllows.
The MAP is a $2$-state interrupted Poisson process with a mean sojourn time in both states of $100$. In state $i$
the MAP generates 
arrivals at rate $\lambda_i$, where $\lambda_1=0$ and $\lambda_2$ is set such that the load of the queue equals
$0.8$. For the "multi-level" jobs we used an Erlang-$3$ distributions at level $c$ with mean $1/\mu_c = 1/(1.1)^c$
for $c=1,\ldots,C$.
The rate at which level $c+1$ jobs are spawned $\gamma_{c+1}$ was set equal to $0.9^c$ for $c=1,\ldots,C-1$. 

\section{Proof of Theorem \ref{th:main}}\label{sec:proof}

To ease the presentation we first prove Theorem \ref{th:main} for $C=2$ colors, 
the generalization to $C > 2$ colors is not hard and is briefly discussed in Appendix \ref{app:genC}. 
We start by presenting a set of partial differential equations (PDEs) that captures
the evolution of the colored MMFQ. This theorem is then used to prove Theorem \ref{th:main}.

\subsection{A set of PDEs for a $2$-colored MMFQ}

Let $f_i(x,y;t)$ be the probability that at time $t$ there is $x$ fluid of color one and $y$ fluid of color two
and the background state is $i$. Recall that either
\begin{itemize}
    \item $i \in S_-$,
    \item  $i \in S_+^{(2)}$ and $y > 0$, or
    \item $i \in S_+^{(1)}$, $x > 0$ and $y=0$. 
\end{itemize}
Let $f_-(x,y;t)$ be the vector holding the $f_i(x,y;t)$ values for $i \in S_-$,
$f_+(x,y;t)$ the vector holding the $f_i(x,y;t)$ values for $i \in S_+^{(2)}$ when $y > 0$ and
$f_+(x,0;t)$ the vector holding the $f_i(x,0;t)$ values for $i \in S_+^{(1)}$. 

\begin{theorem}\label{th:PDEs}
    The $f_i(x,y;t)$ values satisfy the following set of partial differential equations:
   \begin{align}
       \frac{\partial f_-(x,y;t)}{\partial t}  - \frac{\partial f_-(x,y;t)}{\partial y}  &=
       f_-(x,y;t) T_{--}^{(2)}  + f_+(x,y;t) T_{+-}^{(2)}, \label{eq:y>0_-}\\
        \frac{\partial  f_+(x,y;t)}{\partial t} + \frac{\partial f_+(x,y;t)}{\partial y}  &=
       f_-(x,y;t) T_{-+}^{(2)}  + f_+(x,y;t) T_{++}^{(2)}, \label{eq:y>0_+}
      \end{align} 
   for $x \geq 0, y > 0$ and
   \begin{align}
       \frac{\partial f_-(x,0;t)}{\partial t}  - \frac{\partial f_-(x,0;t)}{\partial x}  &=
       f_-(x,0;t) T_{--}^{(1)}  + f_+(x,0;t) T_{+-}^{(1)} + f_-(x^+,0^+;t), \label{eq:x>0_-}\\
        \frac{\partial  f_+(x,0;t)}{\partial t} + \frac{\partial f_+(x,0;t)}{\partial x}  &=
       f_-(x,0;t) T_{-+}^{(1)}  + f_+(x,0;t) T_{++}^{(1)}, \label{eq:x>0_+}
      \end{align} 
  for $x > 0$, where $f_-(x^+,0^+,t) = \lim_{y \rightarrow 0^+} f_-(x+y,y;t)$.
  For $x=y=0$, we have
   \begin{align}\label{eq:xy00}
       \frac{\partial f_-(0,0;t)}{\partial t} &=
       f_-(0,0;t) T_{--}^{(0)}  +f_-(0^+,0;t) + f_-(0,0^+;t), 
     \end{align} 
  with $f_-(0^+,0;t) = \lim_{x \rightarrow 0^+} f_-(x,0;t)$ and $f_-(0,0^+;t) = \lim_{y \rightarrow 0^+} f_-(0,y;t)$.
  Finally, we have the boundary conditions:
  \begin{align}\label{eq:boundary2}
       f_+(x,0^+;t) &=  f_-(x,0;t) T_{-+}^{(1,2)}  +  f_+(x,0;t) T_{++}^{(1,2)}, \\
    f_+(0,0^+;t) &= f_-(0,0,t) T_{-+}^{(0,2)}, \label{eq:boundary3}\\
     f_+(0^+,0;t) &= f_-(0,0,t) T_{-+}^{(0,1)}, \label{eq:boundary4}
     \end{align} 
     where $f_+(x,0^+;t) = \lim_{y \rightarrow 0^+} f_+(x,y;t)$.
\end{theorem}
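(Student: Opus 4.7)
The plan is to derive these equations by a standard Chapman--Kolmogorov forward argument: compare the joint density at time $t+\Delta t$ to the density at time $t$, enumerate every way a sample path can arrive in state $(x,y,i)$ during $(t,t+\Delta t)$, and then let $\Delta t\to 0$.

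First I would treat the interior region $y>0$. There, when $i\in S_-$ the fluid evolves deterministically with slope $-1$ (unless a background transition occurs), so a path ending at $(x,y,i)$ at time $t+\Delta t$ either started at $(x,y+\Delta t,i)$ and made no transition out of $S_-$ (contributing $f_-(x,y+\Delta t;t)(I+T_{--}^{(2)}\Delta t)$), or it jumped once from some $j\in S_+^{(2)}$ to $i$ during the interval (contributing $f_+(x,y;t)T_{+-}^{(2)}\Delta t$ to first order, since the initial and final $y$ coordinates differ by $O(\Delta t)$). Summing, dividing by $\Delta t$, and passing to the limit yields \eqref{eq:y>0_-}; flipping the sign of the deterministic drift gives \eqref{eq:y>0_+}.

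The more delicate case is the line $\{y=0,\;x>0\}$. In addition to paths that stayed in $\Omega_1$ throughout (giving the $T_{--}^{(1)}$ and $T_{+-}^{(1)}$ terms in \eqref{eq:x>0_-}), I need to include paths that at time $t$ had a small color-$2$ layer of height $s\in(0,\Delta t)$ which was exhausted during the interval. Such a path starts at $(x+\Delta t - s,\,s,\,i)$, spends $s$ units of time consuming color $2$ and $\Delta t - s$ consuming color $1$, and contributes density $f_-(x+\Delta t-s,s;t)\,ds$. Integrating over $s\in(0,\Delta t)$ and collapsing onto $s=0$ produces $f_-(x^+,0^+;t)\Delta t + o(\Delta t)$, which is exactly the extra term appearing in \eqref{eq:x>0_-}. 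No analogous crossing term arises in \eqref{eq:x>0_+}, since $S_+^{(1)}$ is not reachable from $\Omega_2$ by a single background transition (transitions out of $\Omega_c$ to an increasing state only land in $S_+^{(c')}$ with $c'>c$). Equation \eqref{eq:xy00} is derived analogously, with two separate crossing terms corresponding to exhaustion of color $1$ from the $x$-axis and exhaustion of color $2$ from the $y$-axis.

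For the boundary conditions \eqref{eq:boundary2}--\eqref{eq:boundary4}, I would argue by flux balance. A state $(x,0^+,j)$ with $j\in S_+^{(2)}$ can be entered only through a background transition of the second kind from some $(x,0,i)$; the instantaneous rate at which such transitions occur, per unit density, is $f_-(x,0;t)T_{-+}^{(1,2)}+f_+(x,0;t)T_{++}^{(1,2)}$. Since color-$2$ fluid then leaves the boundary at unit speed, the limiting density matches this creation rate, giving \eqref{eq:boundary2}; the same balance at the origin, using $T_{-+}^{(0,2)}$ and $T_{-+}^{(0,1)}$, yields \eqref{eq:boundary3} and \eqref{eq:boundary4}. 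The main obstacle in the plan is the careful handling of the one-sided limit $f_-(x^+,0^+;t)=\lim_{y\to 0^+}f_-(x+y,y;t)$: the density is generally discontinuous across the color boundary, and this particular limit must be taken along the deterministic trajectory $(x+y,y)$ that a decreasing path traces just before color $2$ is exhausted. Once this geometric bookkeeping is pinned down, the remaining steps are routine Chapman--Kolmogorov arithmetic.
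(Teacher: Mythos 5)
Your proposal is correct and follows essentially the same route as the paper's own proof: a forward Chapman--Kolmogorov expansion over $(t,t+\Delta)$, with the extra term in \eqref{eq:x>0_-} obtained by integrating over paths whose thin color-$2$ layer is exhausted during the interval and identifying the limit along the diagonal with $\lim_{y\rightarrow 0^+} f_-(x+y,y;t)$, and the boundary conditions \eqref{eq:boundary2}--\eqref{eq:boundary4} obtained by balancing the rate of entry into a thin layer of freshly added fluid. The only minor difference is your justification for the absence of a crossing term in \eqref{eq:x>0_+}: the paper's cleaner phrasing is that reaching $i\in S_+^{(1)}$ after exhausting color $2$ would require the crossing to be combined with a phase change, an event of probability $o(\Delta)$.
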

\begin{proof}
The proof is presented in Appendix \ref{app:theo2}.
\end{proof}
\stepcounter{remarkcount}
\paragraph{Remark \arabic{remarkcount}:} If we set the derivative with respect to $t$ in \eqref{eq:y>0_-}-\eqref{eq:boundary4} equal to zero, we obtain
a set of PDEs that we refer to as {\bf the stationary set of PDEs}. This stationary set has at most one partial derivative in each equation
and all equations are linear. Further, the stationary set of PDEs has a unique solution given the vectors $p_-=\pi_-(0,0)$ and $\pi_-(x,0^+)=\pi_-(x^+,0^+)$ for any $x \geq 0$.  This can be noted
as follows. The vector $p_-$ uniquely determines $\pi_+(0^+,0)$ via  \eqref{eq:boundary4}.
The vector $\pi_-(0^+,0)$ is obtained from $p_-$ and $\pi_-(0,0^+)$ using \eqref{eq:xy00}. Having obtained
$\pi_+(0^+,0)$ and $\pi_-(0^+,0)$ and using the given values for $\pi_-(x^+,0^+)$, we get a unique solution for
$\pi_+(x,0)$ and $\pi_-(x,0)$ for any $x>0$ via \eqref{eq:x>0_-} and \eqref{eq:x>0_+}, which corresponds to a
nonhomogeneous linear set of ODEs. More specifically, we have
\[[\pi_+(x,0),\pi_-(x,0)] = [\pi_+(0^+,0),\pi_-(0^+,0)] e^{T^{(1)}V^{(1)}x}-\int_{0}^x [0,\pi_-(x^+,0^+)] e^{T^{(1)}V^{(1)}(x-z)}dz,\]
where $V^{(c)}$ is a diagonal matrix holding $|S_+^{(c)}|$ ones, followed by $|S_-|$ minus ones, for $c=1,2$.
From these, \eqref{eq:boundary2} yields
$\pi_+(x,0^+)$ for any $x > 0$. The value for $\pi_+(0,0^+)$ is found using \eqref{eq:boundary3} as $p_-$ is given.
Hence, we know all $\pi_+(x,0^+)$ for $x\geq 0$ and have all the $\pi_-(x,0^+)$ values for $x\geq 0$ as given. 
$\pi_+(x,y)$ and $\pi_-(x,y)$ for $x \geq 0$ and $y>0$ can then be expressed as using \eqref{eq:y>0_-} and \eqref{eq:y>0_+}, 
which correspond to a homogeneous linear set of ODEs, as such
\[[\pi_+(x,y),\pi_-(x,y)] = [\pi_+(x,0^+),\pi_-(x,0^+)] e^{T^{(2)}V^{(2)}x},\]
for $x \geq 0$ and $y > 0$.

\refstepcounter{remarkcount}
\paragraph{Remark \arabic{remarkcount}\label{rem:A}:} The stationary set of PDEs also has a unique solution given $p_-$ and the requirement
that $\pi_-(x^+,0^+)=\pi_-(x,0^+)=\pi_+(x,0^+)A$ for some matrix $A$. As before $p_-$ uniquely determines $\pi_+(0^+,0)$ via  \eqref{eq:boundary4} and now $\pi_-(0^+,0) = \pi_+(0^+,0)A$. Combining \eqref{eq:x>0_-}, \eqref{eq:x>0_+} and \eqref{eq:boundary2}
with the fact that $\pi_-(x^+,0^+)=\pi_+(x,0^+)A$, yields a homogeneous linear set of ODEs for $[\pi_+(x,0),\pi_-(x,0)]$ given by
   \begin{align*}
       - \frac{\partial \pi_-(x,0)}{\partial x}  &=
       \pi_-(x,0) (T_{--}^{(1)}+T_{-+}^{(1,2)}A)  + \pi_+(x,0) (T_{+-}^{(1)}+T_{++}^{(1,2)}A) \\
       \frac{\partial \pi_+(x,0)}{\partial x}  &=
       \pi_-(x,0) T_{-+}^{(1)}  + \pi_+(x,0) T_{++}^{(1)},
      \end{align*} 
which has the unique solution
\begin{align*}
 [\pi_+(x,0),\pi_-(x,0)] = [\pi_+(0^+,0),\pi_-(0^+,0)] e^{T_AV^{(1)}x},      
      \end{align*} 
      with $T_A = \begin{bmatrix}
     T_{++} & T_{+-}^{(1)}+T_{++}^{(1,2)}A \\ T_{-+} & T_{--}^{(1)}+T_{-+}^{(1,2)}A
 \end{bmatrix}$ for $x>0$. By \eqref{eq:boundary2} we therefore also have $\pi_+(x,0^+)$ for $x > 0$.
 $\pi_+(0,0^+)$ is given by $\eqref{eq:boundary3}$ and  $\pi_-(x,0^+) = \pi_+(x,0^+)A$ for $x \geq 0$.
 The values for $[\pi_+(x,y),\pi_-(x,y)]$ again follow from \eqref{eq:y>0_-} and \eqref{eq:y>0_+} for $y > 0$
 by solving a homogeneous linear set of ODEs.

\subsection{Proof of Theorem \ref{th:main} for $C=2$ colors}

Let $\pi_i(x,y)=\lim_{t\rightarrow \infty} f_i(x,y;t)$ be the steady state densities and
$(p_-)_i = \lim_{t \rightarrow \infty} f_i(0,0;t)$ be the probability mass of the boundary states, then $\pi_i(x,y)$ and $(p_-)_i$ satisfy the stationary version of
\eqref{eq:y>0_-}-\eqref{eq:boundary4}, where
the partial derivatives with respect to $t$ are replaced by zero.

    The proof of Theorem \ref{th:main} consists in showing that for the vector $p_-$ that corresponds to the stationary probability
    vector of the colored MMFQ censored on the states $\{(0,i)|i\in S_-\}$, the stationary set of PDEs has a 
    solution given by  (with $x,y>0$):
    \begin{align}
    [\pi_+(x,y),\pi_-(x,y)] &= p_- T_{-+}^{(0,1)} e^{K_1x}(T_{++}^{(1,2)}+\Psi_1 T_{-+}^{(1,2)}) e^{K_2y}[I,\Psi_2], \\
    [\pi_+(x,0),\pi_-(x,0)] &= p_- T_{-+}^{(0,1)} e^{K_1x}[I,\Psi_1], \\
    [\pi_+(0,y),\pi_-(0,y)] &= p_- T_{-+}^{(0,2)} e^{K_2y}[I,\Psi_2]. 
    \end{align}
    Note that for this solution we have $\pi_-(x^+,0^+)=\pi_-(x,0^+)=\pi_+(x^+,0^+)\Psi_2$, so by Remark \ref{rem:A}
    it is the  unique solution of the stationary set of PDEs for this $p_-$.
    
    Consider the stationary version of \eqref{eq:y>0_+}, which yields that
    \[ p_- T_{-+}^{(0,1)}  e^{K_1x}(T_{++}^{(1,2)}+\Psi_1 T_{-+}^{(1,2)}) e^{K_2y}K_2 = p_- T_{-+}^{(0,1)}  e^{K_1x}(T_{++}^{(1,2)}+\Psi_1 T_{-+}^{(1,2)}) e^{K_2y} [\Psi_2 T_{-+}^{(2)}+T_{++}^{(2)}], \]
    must hold for $x > 0$ and
    \[ p_- T_{-+}^{(0,2)}  e^{K_2y} K_2 = p_- T_{-+}^{(0,2)}  e^{K_2y} [\Psi_2 T_{-+}^{(2)}+T_{++}^{(2)}], \]
    for $x=0$.     These two equalities are clearly satisfied as 
    \begin{align}\label{eq:Cmat}
        K_2 = \Psi_2 T_{-+}^{(2)}+T_{++}^{(2)},
    \end{align}
    by definition of $K_2$.
    
 Proceed with the stationary version of \eqref{eq:y>0_-}  to find that
       \[ - p_- T_{-+}^{(0,1)}  e^{K_1x}(T_{++}^{(1,2)}+\Psi_1 T_{-+}^{(1,2)}) e^{K_2y} K_2 \Psi_2 = p_- T_{-+}^{(0,1)}  e^{K_1x}(T_{++}^{(1,2)}+\Psi_1 T_{-+}^{(1,2)}) e^{K_2y} [\Psi_2 T_{--}^{(2)} + T_{+-}^{(2)}],\]
       for $x>0$ and
       \[ - p_- T_{-+}^{(0,2)} e^{K_2y} K_2 \Psi_2 = p_- T_{-+}^{(0,2)}  e^{K_2y} [\Psi_2 T_{--}^{(2)} + T_{+-}^{(2)}],\]
       for $x=0$. Therefore, it suffices that
    \begin{align}\label{eq:Dmat}
        0 = K_2\Psi_2+ \Psi_2 T_{--}^{(2)}+T_{+-}^{(2)} = \Psi_2 T_{-+}^{(2)}\Psi_2+T_{++}^{(2)}\Psi_2+ \Psi_2 T_{--}^{(2)}+T_{+-}^{(2)}, 
    \end{align}
       due to \eqref{eq:Cmat}. This equation is satisfied by definition of $\Psi_2$, see \eqref{eq:NAREC}.
    
    The stationary version of \eqref{eq:boundary2} demands that
\[  p_- T_{-+}^{(0,1)}  e^{K_1x}  (T_{++}^{(1,2)}+\Psi_1 T_{-+}^{(1,2)}) = p_- T_{-+}^{(0,1)}  e^{K_1x}\Psi_1 T_{-+}^{(1,2)} +
 p_- T_{-+}^{(0,1)}  e^{K_1x}T_{++}^{(1,2)}, \]
which clearly holds.
Focusing on the stationary version of \eqref{eq:x>0_+} we observe that
\[  p_- T_{-+}^{(0,1)}  e^{K_1x} K_1 = p_- T_{-+}^{(0,1)}  e^{K_1x}[\Psi_1 T_{-+}^{(1)} +T_{++}^{(1)}], \]
is required, which holds by definition of $K_1$.

From \eqref{eq:x>0_-} we get the requirement that
\[  -p_- T_{-+}^{(0,1)}  e^{K_1x} K_1 \Psi_1 = p_- T_{-+}^{(0,1)}  e^{K_1x}[\Psi_1 T_{--}^{(1)} +T_{+-}^{(1)}+ (T_{++}^{(1,2)}+\Psi_1 T_{-+}^{(1,2)}) \Psi_2]. \]
This equality is satisfied when
\begin{align*}
    0 &= K_1\Psi_1 +\Psi_1 T_{--}^{(1)} +T_{+-}^{(1)}+(T_{++}^{(1,2)}+\Psi_1 T_{-+}^{(1,2)})\Psi_2 \\
    &= \Psi_1 T_{-+}^{(1)}\Psi_1 +T_{++}^{(1)}\Psi_1+
\Psi_1 T_{--}^{(1)} +T_{+-}^{(1)}+ \Psi_1 T_{-+}^{(1,2)}\Psi_2 +T_{++}^{(1,2)}\Psi_2,
\end{align*}
which is equivalent to \eqref{eq:NAREs} and therefore holds. 
The boundary conditions \eqref{eq:boundary3} and \eqref{eq:boundary4} hold
as $\pi_+(0,0^+) = p_- T_{-+}^{(0,2)}$ and $\pi_+(0^+,0) = p_- T_{-+}^{(0,1)}$, respectively. 
The requirement for $p_-$ now follows from the stationary version of 
\eqref{eq:xy00}:
    \begin{align*}
   0 &= p_- T_{--}^{(0)} + p_- T_{-+}^{(0,1)} \Psi_1  + p_- T_{-+}^{(0,2)}\Psi_2.    
    \end{align*}
    which corresponds to \eqref{eq:pmin}.
Note that $T_{--}^{(0)} +T_{-+}^{(0,1)} \Psi_1  +T_{-+}^{(0,2)}\Psi_2$ is the rate matrix of the MMFQ observed only when
the fluid equals zero (i.e., censored on the states $\{(0,i)|i\in S_-\}$), as such, the vector $p_-$ is the correct boundary vector and due to the stochastic
interpretation of the matrices $\Psi_1$ and $\Psi_2$, the condition
$\pi_-(x,0^+) = \pi_+(x,0^+)\Psi_2$ must hold. This implies that the distribution given in Theorem \ref{th:main} 
is the stationary distribution of the colored MMFQ, where the normalizing condition follows from
    \begin{align*}
1&=         p_- e +  \int_0^\infty (\pi_-(x,0)e + \pi_+(x,0)e) dx  + \int_0^\infty (\pi_+(0,y)e+\pi_-(0,y)e) dy \\
  &\hspace{1cm }+ \int_0^\infty\int_0^\infty (\pi_+(x,y)e+\pi_-(x,y)e) dx dy,
    \end{align*}
    as we need to obtain a distribution.

\section{Conclusions and future work}\label{sec:future}

In this paper we generalized the Markov-modulated fluid queue framework by introducing colored MMFQs and
colored MMFQs with fluid jumps. We developed a matrix analytic method to compute the stationary distribution of a  colored
MMFQ (with fluid jumps) and demonstrated that the new framework enables the analysis of queueing systems that would otherwise be
intractable due to a state-space explosion.
The novel framework can be further extended in many directions. Possible lines of future work include multi-layered 
colored MMFQs and colored MMFQs where
the colors are unordered. 

\bibliographystyle{abbrv}
\bibliography{thesis}

\appendix

\section{The matrices $T_{--}^{(c)}$ and $T_{-+}^{(c,C)}$ do not depend on $c$ and $T_{-+}^{(c)}=T_{-+}^{(c,\ell)}=0$ for $\ell < C$}
\label{app:special2}
Introducing colors in an MMFQ can be regarded as a way to add some form of memory to the MMFQ, where we
remember the fluid level by marking it with a color change. When the matrices $T_{--}^{(c)}$ or $T_{-+}^{(c,C)}$ 
 depend on $c$, the colored MMFQ makes use of this memory. 
 Similarly, as soon as one of the $T_{-+}^{(c,\ell)}$  or $T_{-+}^{(c)}$ matrices is nonzero for $c,\ell < C$ the memory is used.
 This can be understood by noting that the color cannot become $\ell < C$ when color $C$ is on top of the fluid, so
 the color cannot become $\ell$ from any other color if the color on top of the fluid does not influence the colored MMFQ
 when the fluid decreases. 
In this subsection we show that if the $T_{--}^{(c)}$ and $T_{-+}^{(c,C)}$ 
matrices are independent of $c$, meaning
\[T_{--}^{(c)} = T_{--}, \ \ \   \mbox{ and } \ \ \ 
T_{-+}^{(c,C)} = T_{-+}^{(C)},\] 
for all $c$  and $T_{-+}^{(c)}=T_{-+}^{(c,\ell)}=0$ for $\ell < C$,  
 then the colored MMFQ 
essentially reduces to a classic MMFQ with a special structure.

To see this, define the matrix $\hat \Psi$ as 
\begin{align*}
    \hat \Psi =  \begin{bmatrix}
    \Psi_1 \\ \Psi_2 \\ \vdots \\  \Psi_C 
\end{bmatrix}. 
\end{align*}
When  $T_{--}^{(c)}$ 
and $T_{-+}^{(c,C)}$ do not depend on $c$, and $T_{-+}^{(c)}=T_{-+}^{(c,\ell)}=0$ for $\ell < C$,
the set of equations given by \eqref{eq:NAREC} and \eqref{eq:NAREs} becomes
\begin{align*}
    T_{++}^{(C)}\Psi_C + & T_{+-}^{(C)}  +\Psi_CT_{-+}^{(C)}\Psi_C +
   \Psi_C T_{--}= 0,\\
    T_{++}^{(c)}\Psi_c + & \left( T_{+-}^{(c)} + \sum_{\ell > c} T_{++}^{(c,\ell)}\Psi_\ell \right)
     +
   \Psi_c \left(T_{--}+ T_{-+}^{(C)}\Psi_C \right) = 0,
\end{align*}
for $c=1,\ldots, C-1$ .
We can combine these $C$ matrix equations in matrix form as
\begin{align*}
     \underbrace{\begin{bmatrix}
         T_{+-}^{(1)} \\
         \vdots\\
         T_{+-}^{(C)}
     \end{bmatrix}}_{\hat T_{+-}} + 
     \underbrace{\begin{bmatrix}
    T_{++}^{(1)} & T_{++}^{(1,2)} & \ldots & T_{++}^{(1,C)} \\
    0 & T_{++}^{(2)} & \ldots & T_{++}^{(2,C)} \\
    \vdots & \vdots & \ddots & \vdots \\
    0 & 0 & \ldots & T_{++}^{(C)} 
\end{bmatrix}}_{\hat T_{++}} \hat  \Psi
    +\hat \Psi \underbrace{T_{--}}_{\hat T_{--}}  +
   \hat \Psi \underbrace{\begin{bmatrix}
       0 & \ldots & 0 & T_{-+}^{(C)}
   \end{bmatrix}}_{\hat T_{-+}}\hat  \Psi = 0,
\end{align*}
which is the NARE of a classic MMFQ characterized by the matrices $\hat T_{--}, \hat T_{-+}, \hat T_{+-}$ and $\hat T_{++}$. The rate matrix $K$ given by \eqref{eq:K} becomes
\begin{align*}
    K = \hat T_{++} + \hat \Psi \hat T_{-+},
\end{align*}
as in the classic MMFQ. The same holds for the boundary vector and normalization condition.

\section{Proof Sketch of Theorem \ref{th:main} for general $C$}\label{app:genC}

The proof proceeds in the same fashion as the proof in Section \ref{sec:proof} for $C=2$ colors and makes use of the following theorem
that can be proven in a similar way as Theorem \ref{th:PDEs}.

Let $\Theta_c = \{(x_1,\ldots,x_c,0,\ldots,0) \in \mathbb{R}^C| x_c > 0\}$.
Let $f_i(\vec x;t)$ with $\vec x \in \Theta_c$ represent the state where there is $x_k$ fluid of color $k$
at time $t$ and the background state is $i$. Note that $c$ is the color that is currently
added or removed from the colored MMFQ queue. Then, either
\begin{itemize}
   \item $i \in S_-$, or
    \item  $i \in S_+^{(c)}$. 
\end{itemize}
Let $f_-(\vec x;t)$ be the vector holding the $f_i(\vec x;t)$ values for $\vec x \in \Theta_c$ and $i \in S_-$. Let
$f_+^{(c)}(\vec x;t)$ be the vector holding the $f_i(\vec x;t)$ values for $i \in S_+^{(c)}$.

Denote $\vec e_k$ as the $k$-th row of the size $C$ unity matrix.
When $\vec x \in \Theta_c$ and $k > c$, then  
$\vec y = \vec x+ \delta \vec e_k \in \Theta_{k}$ with $y_{c+1}=\ldots=y_{k-1}=0$.  

\begin{theorem}
    The $f_i(\vec x;t)$ values with $\vec x \in \Theta_c$ satisfy the following set of partial differential equations:
   \begin{align}
       \frac{\partial f_-(\vec x;t)}{\partial t}  - \frac{\partial f_-(\vec x;t)}{\partial x_{c}}  &=
       f_-(\vec x;t) T_{--}^{(c)}  + f_+^{(c)}(\vec x;t) T_{+-}^{(c)} +
         \sum_{k = c+1}^C f_-(\vec x^{(k+)};t), \label{eq:genx>0_-}\\
        \frac{\partial  f_+^{(c)}(\vec x;t)}{\partial t} + \frac{\partial f_+^{(c)}(\vec x;t)}{\partial x_{c}}  &= f_-(\vec x;t) T_{-+}^{(c)}  + f_+^{(c)}(\vec x;t) T_{++}^{(c)}, \label{eq:geny>0_+}
      \end{align} 
    where $f_-(\vec x^{(k+)},t) = \lim_{\delta \rightarrow 0^+} f_-(\vec x+ \vec e_k \delta;t)$.
  Further,
   \begin{align}\label{eq:genxy00}
       \frac{\partial f_-(\vec 0;t)}{\partial t} &=
       f_-(\vec 0;t) T_{--}^{(0)}  + \sum_{k = 1}^C f_-(\vec 0^{(k+)};t). 
     \end{align} 
  In addition, the following boundary conditions hold:
  \begin{align}\label{eq:boundaryC}
       f_+(\vec x^{(k+)};t) &=  f_-(\vec x;t) T_{-+}^{(c,k)}  +  f_+(\vec x;t) T_{++}^{(c,k)}, \\
    f_+(\vec 0^{(k+)};t) &= f_-(\vec 0,t) T_{-+}^{(0,k)}, \label{eq:boundaryk}
     \end{align} 
     for $k=1,\ldots,C$, $k>c$ and $\vec x \in \Theta_c$.
\end{theorem}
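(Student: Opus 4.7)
The plan is to derive this PDE system by a Chapman--Kolmogorov forward-equation argument in the same spirit as the proof of Theorem~\ref{th:PDEs} for $C=2$. Fix a target state $(\vec x,i)$ with $\vec x\in\Theta_c$, enumerate the possible histories that bring the colored MMFQ to $(\vec x,i)$ at time $t+h$ from time $t$, sum the leading-order ($O(h)$) contributions, and let $h\to 0$. I adopt the convention that the diagonal entries of $T_{--}^{(c)}$ and $T_{++}^{(c)}$ make each row sum equal to the negative of the total outflow rate from the corresponding state (including outflows via first-kind transitions to $S_+^{(c)}$ or $S_-$ and via second-kind transitions to $\Omega_k$ for $k>c$); with this convention every loss term in the balance equation is absorbed into $f_-T_{--}^{(c)}$ or $f_+^{(c)}T_{++}^{(c)}$, so only explicit gain terms remain on the right-hand side.

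For interior $\vec x\in\Theta_c$ with $i\in S_-$, the fluid translates at unit speed along $-\vec e_c$, so the contribution from ``no event in $(t,t+h)$'' is $f_i(\vec x+h\vec e_c;t)$; a Taylor expansion produces the $-\partial_{x_c}f_i$ term after the diagonal of $T_{--}^{(c)}$ is absorbed. Single first-kind background transitions in $[t,t+h]$ contribute $h\,f_-(\vec x;t)T_{--}^{(c)}$ and $h\,f_+^{(c)}(\vec x;t)T_{+-}^{(c)}$ to leading order. The crucial extra contribution is the \emph{boundary flux from above}: for each $k>c$, paths that were in $\Theta_k$ at time $t$ with coordinate $x_k\in(0,h)$ may deplete their color-$k$ fluid during $[t,t+h]$ and land in $\Theta_c$, yielding a slab integral that evaluates to $h\,f_-(\vec x^{(k+)};t)+o(h)$ by continuity of $f_-$ and unit transport speed. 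Summing over $k>c$ produces the flux term in~\eqref{eq:genx>0_-}. The $S_+^{(c)}$ case~\eqref{eq:geny>0_+} is analogous, but the transport is along $+\vec e_c$ (flipping the sign of the spatial derivative) and there is \emph{no} flux-from-above term: an increasing-fluid state $(\vec x,i)$ with $i\in S_+^{(c)}$ cannot be reached by dropping from any $\Theta_k$ with $k>c$. The pinned boundary equation~\eqref{eq:genxy00} has no transport term and retains only the first-kind gain $f_-(\vec 0;t)T_{--}^{(0)}$ plus one flux contribution $f_-(\vec 0^{(k+)};t)$ for each $k\in\{1,\ldots,C\}$.

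The inflow boundary conditions~\eqref{eq:boundaryC}--\eqref{eq:boundaryk} at the low-$x_k$ edge of $\Theta_k$ follow from the general principle that, for unit-speed transport, the density at an inflow boundary equals the rate at which probability mass is injected across that boundary. The only injection mechanism is a second-kind transition into $\Omega_k$ from some $\Theta_c$ with $c<k$ (or from $\vec 0$), occurring at rates encoded by $T_{-+}^{(c,k)}$, $T_{++}^{(c,k)}$, and $T_{-+}^{(0,k)}$; matching injection rate to boundary density gives the stated identities. The main obstacle I foresee is making the boundary-flux computation rigorous: one must verify that the probability of a path being in the $h$-thick slab $\{y_k\in(0,h)\}$ of $\Theta_k$ at time $t$ and having its color-$k$ fluid deplete during $[t,t+h]$ is indeed $h\,f_-(\vec x^{(k+)};t)+o(h)$. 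This rests on a standard level-crossing argument, but extra care is needed when $\vec x$ has further zero coordinates so that several depletion events could in principle coincide in $[t,t+h]$; such coincidences contribute only $o(h)$ and can be discarded.
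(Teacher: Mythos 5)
Your proposal follows essentially the same route as the paper: a Chapman--Kolmogorov forward argument over $(t,t+h)$ with the no-transition transport term, first-kind transition gains, a diagonal level-crossing computation for the flux from $\Theta_k$ ($k>c$) yielding $f_-(\vec x^{(k+)};t)$, and the inflow-boundary matching for \eqref{eq:boundaryC}--\eqref{eq:boundaryk}, exactly as in the paper's proof of Theorem~\ref{th:PDEs} which it states extends verbatim to general $C$. The only nitpick is your claim that a state with $i\in S_+^{(c)}$ ``cannot be reached by dropping from $\Theta_k$'': it can, but only together with an additional $S_-\to S_+^{(c)}$ transition, which is why the contribution is $o(h)$ rather than impossible.
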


\section{Proof of Theorem \ref{th:PDEs}}\label{app:theo2}
    We start with \eqref{eq:y>0_-}, that is, the case where $y > 0$ and $i \in S_-$. 
    Let $0 < \Delta < y$, suppose we want to be in state $(x,y,i)$ at time $t + \Delta$.
    There are three ways to be in state $(x,y,i)$ at time $t+\Delta$ that are not $o(\Delta)$:
    (i) the fluid simply decreases in $(t,t+\Delta)$, (ii) there is a phase change of the background process
    from another phase $j \in S_-$ in $(t,t+\Delta)$ and (iii) there is a phase change from a 
    phase $j\in S_+^{(2)}$ in $(t,t+\Delta)$.
    For the first two events the fluid must equal $x+\Delta$ at time $t$. The
    first event takes place with probability 
    \[f_i(x,y+\Delta;t) (1+(T_{--})_{i,i}\Delta) + o(\Delta).\] 
    The second event occurs with
    probability $ f_j(x,y+\Delta;t)(T_{--})_{j,i} \Delta$. For the last event the
    fluid increases for some time $v \in (0,\Delta)$ and then decreases, so the fluid at time
    $t$ must equal $y-v+(\Delta-v)$. Hence, the last event happens with probability
    \[ \int_0^\Delta f_j(x,y+\Delta -2v;t) (T_{+-})_{j,i} dv.\]
    Combining this, dividing by $\Delta$ and taking limits, we have
    \begin{align*}
        \lim_{\Delta \rightarrow 0^+} &\frac{f_i(x,y;t+\Delta)-f_i(x,y;t)+f_i(x,y;t) - f_i(x,y+\Delta;t)}{\Delta} = \\
        &f_i(x,y;t) (T_{--})_{i,i} + 
        \lim_{\Delta \rightarrow 0^+} \int_0^\Delta \frac{f_j(x,y+\Delta -2v;t)}{\Delta} dv  \ (T_{+-})_{j,i}. 
     \end{align*}
     Therefore \eqref{eq:y>0_-} follows as 
     \begin{align*}
         \lim_{\Delta \rightarrow 0^+} & \int_0^\Delta \frac{f_j(x,y+\Delta -2v;t)}{\Delta} dv = 
         -\frac{1}{2}\lim_{\Delta \rightarrow 0^+} \int_{y+\Delta}^{y-\Delta} \frac{f_j(x,u;t)}{\Delta} du =\\
         &\frac{1}{2}\lim_{\Delta \rightarrow 0^+} \frac{F_j(x,y+\Delta)-F_j(x,y)+F_j(x,y)-F_j(x,y-\Delta)}{\Delta} = 
         f_j(x,y;t),
     \end{align*}
     where $\frac{\partial F_j(x,y;t)}{\partial y} = f_j(x,y;t)$.
     For \eqref{eq:y>0_+} an analogous argument can be used: the fluid goes up in $(t,t+\Delta)$ 
     in case of the first two events (no phase change or a change from a positive to a positive phase), while
     for the third event it goes down for a time $v$ and then up for a time $\Delta-v$. This yields for $y > 0$
     and $i \in S_+^{(2)}$
    \begin{align*}
        \lim_{\Delta \rightarrow 0^+} &\frac{f_i(x,y;t+\Delta)-f_i(x,y;t)+f_i(x,y;t) - f_i(x,y-\Delta;t)}{\Delta} = \\
        &f_i(x,y;t) (T_{++})_{i,i} + 
        \lim_{\Delta \rightarrow 0^+} \int_0^\Delta \frac{f_j(x,y-\Delta +2v;t)}{\Delta} dv  \ (T_{-+})_{j,i}. 
     \end{align*}
     This implies \eqref{eq:y>0_+}. Note that \eqref{eq:y>0_-} and \eqref{eq:y>0_+} are also valid for $x=0$.
     
     Proceeding with \eqref{eq:x>0_-}, we have three similar events as for \eqref{eq:y>0_-} to reach
     state $(x,0,i)$ at time $t+\Delta$: (i) there is no phase change, (ii) there is a phase changes between
     two states in $S_-$ and (iii) there is a phase change from $s_+^{(1)}$ to $S^-$. These yield the same terms
     as in \eqref{eq:y>0_-} with the superscript $^{(2)}$ replaced by $^{(1)}$, with $y=0$ and a partial derivative
     with respect to $x$ instead of $y$. There is however a fourth event that can result in state $(x,0,i)$ at time $t+\Delta$.
     It could be that the state is $(x+\Delta-v,v,i)$ at time $t$ and there is no phase change. This event occurs
     with probability
     \[ \int_0^{\Delta} f_i(x+\Delta-v,v;t) dv   (1+(T_{--})_{i,i}\Delta + o(\Delta)). \]
     If we divide by $\Delta$, take the limit and set $u=\Delta-v$ we get
     \[ \lim_{\Delta \rightarrow 0^+} \frac{1}{\Delta} \int_0^{\Delta} f_i(x+u,\Delta-u;t) du. \]
    So we are taking the limit of the average of $f_i(x,y;t)$ over the diagonal line segment in $[x,x+\Delta]\times[0,\Delta]$,
    which corresponds to $\lim_{y \rightarrow 0^+}f_i(x+y,y;t)$. This results in the last term in \eqref{eq:x>0_-}.
    Note that all other events where $y$ becomes zero combined with a phase change are $o(\Delta)$ in probability.

    The reasoning for \eqref{eq:x>0_+} is similar to \eqref{eq:y>0_+}. We do not need to add an extra term as in
    \eqref{eq:x>0_-} because any event that would make $y$ zero requires a phase change
    when $i \in S_+^{(1)}$ and therefore these events are $o(\Delta)$.
    For \eqref{eq:xy00} we note that the boundary state $(0,0,i)$ with $i \in S_-$ 
    at time $t + \Delta$ can either be reached
    starting at time $t$ from state $(0,0,i)$ without a phase change or from some state $(0,0,j)$, for $j \not= i$, with a phase change. State $(0,0,i)$ can also be reached from states of the form $(x,0,i)$ or $(0,y,i)$ yielding the last two
    terms.

    In order to be in some state $(x,v,i)$ with $v \in (0,\Delta)$ and $i \in S_+^{(2)}$ at time $t+\Delta$ two events can occur that are
    not $o(\Delta)$: (a) there is a jump from some state $(x,0,j)$ with $j \in S_-$ in $(t,t+\Delta) $ or (b) there is a jump from some state $(x,0,j)$ with $j \in S_+^{(1)}$ in $(t,t+\Delta)$. Hence,
    \[ \int_0^\Delta f_+(x,v;t+\Delta)dv = 
    \int_0^\Delta f_-(x+v,0;t) T_{-+}^{(1,2)}dv + \int_0^\Delta f_+(x-v,0;t) T_{++}^{(1,2)}dv  + o(\Delta).\]
    Dividing by $\Delta$ and taking limits for $\Delta$ to zero yields the expression for $f_+(x,0^+;t)$ in \eqref{eq:boundary2}.

\end{document}